\newtheorem{lemma}{Lemma}[section]
\newtheorem{theorem}[lemma]{Theorem}
\newcommand{\ms}[1]{\ensuremath{\mathsf{#1}}}
\newcommand{\bra}[1]{\ensuremath{\langle#1|}}
\newcommand{\ket}[1]{\ensuremath{|#1\rangle}}
\newcommand{\argmax}{\operatornamewithlimits{arg\ max}}
\renewcommand{\bar}{\overline}
\newcommand{\ver}{{\ms{V}}}
\newcommand{\edge}{{\ms{E}}}
\newcommand{\nbr}{{\ms{nbr}}}
\newcommand{\lens}{{\ms{lens}}}
\newcommand{\gap}{{\ms{min gap}}}
\newcommand{\mingap}{{\ms{min gap}}}
\newcommand{\energy}{{\mathcal{E}}}
\newcommand{\ham}{{\mathcal{H}}}
\newcommand{\wmis}{\ms{mis}}
\newcommand{\mis}{\ms{mis}}
\newcommand{\GS}{\ms{GS}}
\newcommand{\FS}{\ms{FS}}
\newcommand{\mdef}{\stackrel{\mathrm{def}}{=}}
\newcommand{\oy}{{\mathcal{Y}}}
\def\final{1} 
\newcommand{\vnote}[1]{[{\small Vicky: \bf #1}]\marginpar{*}}
\newcommand{\sidecomment}[1]{\marginpar{\tiny #1}}
\newcommand{\vnote}[1]{}
\newcommand{\sidecomment}[1]{}
\begin{document}
\title{The Effects of the Problem Hamiltonian Parameters on the Minimum
  Spectral Gap in Adiabatic Quantum Optimization}
\author{
Vicky Choi\\
Gladiolus Veritatis Consulting Co.}

\maketitle       
\begin{abstract}
We study the relation between the
Ising problem Hamiltonian parameters and the minimum spectral gap
(min-gap) of the system
Hamiltonian in the Ising-based quantum annealer.
The main argument we use in this paper to assess the performance of
a QA algorithm is the presence or absence of an {\em anti-crossing}
during quantum evolution. 
For this purpose, we introduce a new parametrization definition of the
anti-crossing. 
Using  the Maximum-weighted Independent Set (MIS) problem in which
there are  flexible parameters (energy penalties $J$
between pairs of edges) in an Ising  formulation as the model problem,
we construct examples to show that by changing the value  of $J$, we can change the quantum
evolution from one that has an anti-crossing (that results in an
exponential small min-gap) to one that does not have, or the other way
around,  and thus drastically change (increase or decrease) the min-gap.
However, we also show that by changing the value of $J$ alone, one can
not avoid the anti-crossing. 
We recall a polynomial reduction from an Ising problem to an MIS
problem to show that the flexibility of changing parameters without
changing the problem to be solved can be applied to any Ising problem.  
As an example, we show that by such a reduction alone, it is possible
to remove the anti-crossing and thus increase the min-gap.
Our anti-crossing definition is necessarily scaling invariant as scaling the problem Hamiltonian does not change the nature
(i.e. presence or absence) of an anti-crossing. As a side note, we
show exactly how the min-gap is scaled if we scale the problem
Hamiltonian by a constant factor. 
\end{abstract}

\section{Introduction}
In this paper, we seek to understand the quantum evolution of the
adiabatic quantum optimization algorithm (see e.g. \cite{AQC-Review} for a recent
  review) by studying how different 
parameters of the Ising problem Hamiltonian affect the minimum spectral
gap ({\em min-gap}) of
the system Hamiltonian in the Ising-based quantum annealer (QA).
We consider the standard quantum annealing protocol with the following
system Hamiltonian:
\begin{align}
\label{eq:Ham1}
  \ham(s) = (1-s) \ham_{\ms{driver}} + s\ham_{\ms{Ising}}
\end{align}
where 
$\ham_{\ms{driver}} =-\sum_{i \in \ver(G)} \sigma^x_i$ is the standard transverse
field Hamiltonian unless otherwise specified, and
$
\ham_{\ms{Ising}} = \sum_{i \in \ver(G)} h_i \sigma^z_i + \sum_{ij \in \edge(G)} J_{ij}
\sigma^z_i \sigma^z_j
$
is an Ising Hamiltonian defined on a graph $G=(\ver(G), \edge(G))$.

We use the Maximum-weighted Independent Set (MIS) as the model problem for our
investigation as it
admits a natural parameter-flexible Ising Hamiltonian formulation,
referred to as the  {\em MIS-Ising
Hamiltonian} (to be defined in Section 3.1). That is, we can change 
parameters ($h, J$) of the MIS-Ising Hamiltonian without changing the problem to
be solved. Moreover, any Ising Hamiltonian can be easily and
efficiently expressed as an MIS-Ising Hamiltonian because any Ising
problem can be so
reduced to an MIS problem (a reduction procedure is described in
Section 3.2).

The main argument we use in this paper to assess the performance of
a QA algorithm is the presence or absence of an {\em anti-crossing}
during quantum evolution. For this purpose, we introduce a new
parametrization definition for the anti-crossing (to be described in
Section 2). The question we ask is then:
what are the factors of a QA algorithm that contribute to the presence or absence of
an anti-crossing that results in an exponentially small min-gap (which
determines the running time of the algorithm)? 


In this paper, 
we propose a preliminary answer for this question, when the initial
ground state is the uniform superposition state (i.e. $\frac{1}{\sqrt{2^n}}
\sum_{i=0}^{2^n-1} \ket{i}$, as in the case of the transverse-field
drive Hamiltonian):
 the presence or absence of an anti-crossing depends on the
relation of the ground state and the first excited state (of the
problem Hamiltonian) with their
low-energy (driver Hamiltonian dependent) neighboring states
(LENS). 
Roughly speaking, with the uniform superposition state as the initial
ground state, if the ground state has {\em more} LENS than the first
excited state does, there is no
anti-crossing and the min-gap is large, but if the first excited state
has {\em more} LENS than the ground state does, there is an anti-crossing
resulting in a small min-gap.
The factors that affect the  LENS include the type of driver
Hamiltonian ($X$-driver: 
$\ham_{X}= - \sum_{i \in \ver(G)}  \sigma_i^x$; or  $XX$-driver:
$\ham_{XX}^{\lambda}= - \sum_{i \in \ver(G)}  \sigma_i^x +  \lambda\sum_{ij \in
  \edge(G_{\ms{driver}})} \sigma_i^x \sigma_j^x$, where $\lambda$ can
be a positive or negative number), the driver
graph $G_{\ms{driver}}$ (in the $XX$-driver case), and the energy spectrum of states (of the
problem Hamiltonian).  In this paper, we consider mainly $X$-driver
Hamiltonian. The case of $XX$-driver Hamiltonian, both stoquastic and
non-stoquastic,  and different driver
graphs are studied in \cite{driver2}.
Based on our LENS observation, we construct examples 
to answer the questions we study, namely, how different parameters can
or can not change the min-gap by the presence or absence of an anti-crossing.

The paper is organized as follows. In Section 2, we introduce a new
parametrization definition for the anti-crossing. 
We revisit in Section 3 the NP-hard Maximum-weighted
Independent Set (MIS) problem.
In Section 3.1 we rederive the parameter-flexible MIS-Ising Hamiltonian.
We recall an efficient reduction to convert an
Ising problem to an MIS problem in Section 3.2. 
As a side example, we apply the Ising-MIS reduction on  the {\em loop gadgets}, the Ising instances that
were constructed by Tameem Albash to have small min-gaps 
to show the evidence of quantum tunneling. We show that the resulting
MIS-Ising Hamiltonians no longer have an
anti-crossing (even without the need to change the parameters) and the
min-gaps are large.
In Section 4, we describe our LENS idea. Then we construct two examples
based on the idea. In Section 4.1, we construct Example 1 to show
that one can change the parameter $J$ (without changing the problem to
be solved) to change the quantum evolution (from the presence of an
anti-crossing to the absence of one,  or vice versa). In Section 4.2, we construct
Example 2 to show that by changing the value of $J$ alone, one can
not avoid the anti-crossing. 
In Section 5, we
show exactly how the min-gap is scaled if we scale the problem
Hamiltonian by a constant factor. Thus there is no need for
renormalization of the parameters in order for the comparison of
different parameter QA algorithms, as we
know exactly how much the contribution is due to the scaling, and how
much is due to the different value of the parameter.
We conclude with a discussion in Section 6.

\section{A Parametrization Definition of An Anti-crossing}
Anti-crossing, also known as avoided level
crossing or level repulsion, is a well-known concept for physicists. 
A parametrized definition of an
anti-crossing is given by Wilkinson in
\cite{Wilkinson1,Wilkinson2}. In this definition, the energy levels in
the neighborhood of an anti-crossing at $s^*$ takes the form of a {\em
  hyperbola}:
\begin{align}
\label{eq:wilkinson}
  E^{\pm}(s) = E(s^*) + B(s -s^*) \pm \frac{1}{2}[\Delta_{min}^2 + A^2(s-s^*)^2]^{1/2}
\end{align}
where $\Delta_{min}$ denote the min-gap size, $A$ and $B$ are
respectively the difference and the mean of the slopes of the asymptotes
of the hyperbola (referred as the asymptotic slopes of the two
curves in the original paper). 

For example, such a definition was
adopted in a recent paper \cite{Pechukas-gas2018} to study
the effect of noise on the quantum system. 
The derivation of the formula is based on the idea that in the
neighborhood of $s^*$ the behavior of these two energy levels can be
described by degenerate perturbation theory: the energy levels are
eigenvalues of the $2 \times 2$ matrix 
$$ H=
  \begin{bmatrix}
    e^+ & h\\
h & e^-
  \end{bmatrix}
.
$$

However, for our purpose (to identify an anti-crossing based on the
numerical diagonization of the Hamiltonian), we introduce the following
parametrization. This parametrization is also based on the same idea
that  in the
neighborhood of $s^*$ the behavior of these two energy levels can be
described by degenerate perturbation theory. But instead of the energy
values alone, we make use of the two strongly mixed eigenstates to identify the avoided crossing point.
It is likely that the
similar but probably non-quantitative idea has been described somewhere in the
literature. 

Let $\ket{E_k(s)}$ ($E_k(s)$  respectively) be the instantaneous
eigenstate (energy respectively) of the system Hamiltonian $\ham(s)$
in Eq.(\ref{eq:Ham1}) at time $s$, i.e., 
$$\ham(s) \ket{E_k(s)} = E_k(s) \ket{E_k(s)}$$ for $k=0,1, \ldots .$
We order and represent the states so that the energies are strictly increasing as
the index increases:
$$
E_0(s) < E_1(s) < E_2(s) < \ldots
$$
That is, if there are some degenerate states, we represent the
corresponding state as a
superposition of the degenerate states. 
We will focus on the lowest two instantaneous  eigenstates, namely, 
$\ket{E_0(s)}$,  the instantaneous ground state, and
$\ket{E_1(s)}$, the instantaneous first excited state.
In particular, $\mingap = \min_{s \in [0,1)} E_1(s) - E_0(s) =
E_1(s^*) - E_0(s^*)$, where $s^*$ is the min-gap position. 
We are interested in whether there is an anti-crossing occurrence at $s^*$.

When $s=1$, the states$\ket{E_k(1)}$ and energies $E_k(1)$ are of the problem (final)
Hamiltonian. Again, we mainly focus on the lowest two levels for $k=0$
and $k=1$. For convenience, we denote
$\ket{\GS} \mdef \ket{E_0(1)}$, the ground state (encode solution,
possibly degenerate)  of the problem
Hamiltonian, and $\ket{\FS} \mdef \ket{E_1(1)}$, the superposition of
the possibly degenerate first excited states. 
We will express the instantaneous eigenstates ($\ket{E_0(s)},
\ket{E_1(s)}$)  in terms of the
eigenstates ($\ket{E_k(1)}$) of the final Hamiltonian.

Define 
\begin{align}
  \label{eq:6}
  a_k(s) = |\bra{E_k(1)}{E_0(s)}\rangle|^2\\
  b_k(s) = |\bra{E_k(1)}{E_1(s)}\rangle|^2
\end{align}
for $k=0,1$. That is, $a_0(s) = |\bra{\GS}{E_0(s)}\rangle|^2$ is the
weight (or overlap) of the solution state with the instantaneous ground state at
time $s$. Similarly, $a_1(s)  = |\bra{\FS}{E_0(s)}\rangle|^2$ is the
weight (or
overlap) of the first excited state (which possibly corresponds to the local
minima of the problem)  with the instantaneous ground state at
time $s$. At $s=1$, we have $a_0(1)=b_1(1)=1$
and $a_1(1)=b_0(1)=0$.
The evolution of  $a_k(s)$ and $b_k(s)$ will play important roles to
help us understand the working of the QA algorithm. 
In particular, we will define
the anti-crossing based on the four quantities $a_k(s)$ and
$b_k(s)$ for $k=0,1$. 

 We now introduce the definition for an anti-crossing, based on two
 parameters: $\gamma \ge 0$ and $\epsilon \ge 0$.
\paragraph{Definition.}
For $\gamma \ge 0$, $\epsilon \ge 0$ we say there is an
{\em $(\gamma, \epsilon)$-Anti-crossing}
 if  there exists a $\delta>0$ such that 
 \begin{enumerate}
 \item For $s \in [s^*-\delta, s^*+\delta]$,
\begin{align}
  \label{eq:1}
  \ket{E_0(s)} = \sqrt{a_0(s)} \ket{\GS} + \sqrt{a_1(s)}\ket{\FS}\\
\ket{E_1(s)} = \sqrt{b_0(s)} \ket{\GS} - \sqrt{b_1(s)}\ket{\FS}
\end{align}
where $a_0(s)+a_1(s) \in [1-\gamma, 1]$, $b_0(s)+b_1(s) \in  [1-\gamma, 1]$. 
Within the time interval $[s^*-\delta, s^*+\delta]$, both
$\ket{E_0(s)}$ and $\ket{E_1(s)}$
are mainly composed of $\ket{\GS}$ and $\ket{\FS}$. That is,
  all other states (eigenstates of the problem Hamiltonian) are
  negligible (which sums up to at most $\gamma \ge 0$).

\item At the avoided crossing point $s=s^*$, 
$a_0, a_1,b_0,b_2 \in [1/2-\epsilon, 1/2+\epsilon]$, for a small
$\epsilon>0$.
That is, $\ket{E_0(s^*)} \approx 1/\sqrt{2} (\ket{\GS}
  +\ket{\FS})$ and $\ket{E_1(s^*)} \approx 1/\sqrt{2} (\ket{\GS}
  -\ket{\FS})$.

\item 
Within the time interval $[s^*-\delta, s^*+\delta]$, $a_0(s)$
increases from $\leq \gamma$ to $\geq (1-\gamma)$, while
  $a_1(s)$ decreases from $\geq(1-\gamma)$ to $ \leq \gamma$. 
The reverse is
  true for $b_0(s), b_1(s)$.
\end{enumerate}

\begin{figure}[h]
  \centering
$$
  \begin{array}[h]{ccc}
  \includegraphics[width=0.44\textwidth]{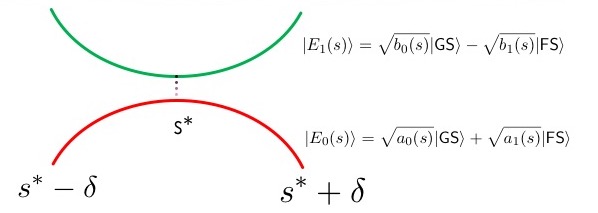} &
\includegraphics[width=0.18\textwidth]{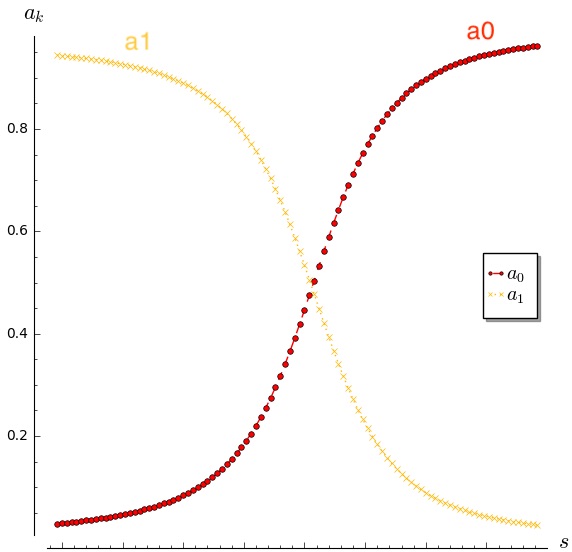} &
\includegraphics[width=0.18\textwidth]{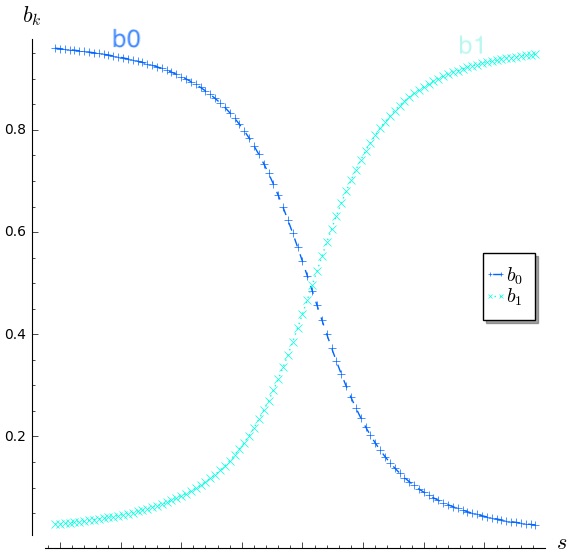} \\
 (a) & (b) & (c)  
  \end{array}
$$
  \caption{(a) An $(\gamma,\epsilon)$-Anti-crossing. At the avoided crossing point $s=s^*$, 
$a_0, a_1,b_0,b_2 \in [1/2-\epsilon, 1/2+\epsilon]$, for a small
$\epsilon>0$.
That is, $\ket{E_0(s^*)} \approx 1/\sqrt{2} (\ket{\GS}
  +\ket{\FS})$ and $\ket{E_1(s^*)} \approx 1/\sqrt{2} (\ket{\GS}
  -\ket{\FS})$. There exists a $\delta>0$, such that  fo $s \in [s^*-\delta, s^*+\delta]$:
$a_0(s)+a_1(s) \in [1-\gamma, 1]$, $b_0(s)+b_1(s) \in  [1-\gamma, 1]$.
(b) Within the time interval $[s^*-\delta, s^*+\delta]$, $a_0(s)$  (in red) 
increases, while
  $a_1(s)$ (in orange)  decreases. 
(c) The reverse is
  true for $b_0(s)$ (in blue), and $b_1(s)$ (in cyan).
}
  \label{fig:anti-crossing}
\end{figure}

A figure depicting an $(\gamma, \epsilon)$-Anti-crossing is shown in
Figure \ref{fig:anti-crossing}. 

\paragraph{Remarks.}
\begin{itemize}
\item 
At the avoided level crossing point $s^*$, the effective
  Hamiltonian  mimics the perturbed degenerate Hamiltonian:
$$
  \begin{bmatrix}
    1 & -\lambda\\
-\lambda & 1
  \end{bmatrix}
$$
with $\ket{\GS} = 1/\sqrt{2}(\ket{0} + \ket{1})$ with eigenenergy
$1-\lambda$, and $\ket{\FS} = 1/\sqrt{2}(\ket{0} - \ket{1})$ with
eigenenergy $1+\lambda$.
\end{itemize}

\paragraph{Parameters and Weak Anti-crossing.}
The parameter $\epsilon$ is the tolerance
we allow at the avoided crossing point. This parameter needs to be
strict and thus we require $\epsilon$ to be very small (e.g. $\le 0.001$).
The parameter $\gamma$ corresponds to the allowed negligible
states. Our numerical examples show that it is possible that the
condition (2) is  satisfied when $\epsilon \approx 0$, and  $a_0(s)+a_1(s)$
is almost $1$ for $s \in  [s^*, s^*+\delta]$, but $a_0(s)+a_1(s)$ is
much less than $1-\gamma$ for $s \in  [s^*-\delta, s^*]$. 
Our numerical examples also show that
when such a situation occurs, the min-gap does not occur at the exact
position of $s^*$. For this reason, we relax the definition and
refer such a case as a {\em weak anti-crossing} with one additional parameter,
$\gamma'$, where $\gamma'>\gamma$ is the relaxed parameter
such that $a_0(s)+a_1(s)>1 - \gamma'$ for $s \in  [s^*-\delta, s^*]$. 
Notice that we maintain the strict requirement at the avoided crossing
point. 
For the weak anti-crossing, before the avoided-crossing point, there
are non-negligible states other than $\ket{\GS}$ and $\ket{\FS}$. But at
the avoided crossing and after avoided crossing, the states are mainly composed
of $\ket{\GS}$ and $\ket{\FS}$.

\paragraph{Anti-crossing and Scaling.}
An anti-crossing is necessarily scaling invariant, as scaling the
problem Hamiltonian should not change the nature (i.e. the
presence or absence) of an anti-crossing. 
We discuss more on this in
Section \ref{sec:scaling}. 

\paragraph{Anti-crossing vs Quantum Tunneling.}
Quantum tunneling is closely related to the anti-crossing. Indeed, one
feature that characterizes the presence of tunneling is by ``a sharp
change in the ground state of the adiabatic evolution at the
degeneracy point ''\cite{tunneling-MAL}. This sharp change is
quantified by the expectation value of the Hamming weight operator
$\langle HW \rangle$
introduced there, which is readily given by $a_1(s) ||\FS-\GS||$, where $||\FS-\GS||$ is the
Hamming distance between $\FS$ and $\GS$, in our anti-crossing
definition. Note that quantum tunneling is more generally described in
terms of a double-well semiclassical potential, see more discussion in \cite{tunneling-MAL}.

\paragraph{Anti-crossing vs Perturbative Crossing.}
Our definition of anti-crossing is more general than the
perturbative crossing in \cite{Amin-Choi,Dickson-Amin}. 
More specifically, 
the  perturbative crossing is defined and  limited to the location near the
end of evolution when the perturbation theory is applied to the problem
Hamiltonian as the unperturbed Hamiltonian;
 whereas our
anti-crossing can happen anywhere (the perturbation theory is still
applicable but the unperturbed Hamiltonian is no longer the pure
problem Hamiltonian).
A perturbative crossing is necessarily an anti-crossing, while an
anti-crossing defined here is not necessarily a perturbative crossing.

\paragraph{Anti-crossing and Min-gap Size.} The min-gap size is
expected to be exponentially small in $O(b^k)$ where $k=
||\FS-\GS||$ for some $0<b<1$. A min-gap estimation formula for the
perturbative crossing was given in \cite{Amin-Choi}. It is desirable
to rigorously derive a bound for the min-gap based on our more general 
anti-crossing definition. 

\paragraph{Generalized Anti-crossing.} One can also generalize the
defintion of the anti-crossing by 
replacing the first excited state $\ket{\FS}$ by a superposition of some
neighboring low lying states. Such an example can be found in \cite{driver2}.

\section{Maximum-Weight Independent Set (MIS) Problem}


The Maximum-Weight Independent Set (MIS)
problem (optimization version) is defined as:

\smallskip
\hspace*{0cm}{\bf Input:} An undirected graph $G (=(\ver(G),\edge(G)))$, where each vertex $i \in \ver(G) = \{1, \ldots, n \}$ is weighted by a
positive rational number $w_i$

\hspace*{0cm}{\bf Output:} A subset $S \subseteq \ver(G)$ such that
$S$ is independent (i.e., for each $i,j \in S$, $i\neq j$, $ij
\not \in \edge(G)$) and the total
{\em weight} of $S$ ($=\sum_{i \in S}
w_i$) is maximized. 
Denote the optimal set by $\wmis(G)$.
\smallskip

We recall a 
quadratic binary optimization formulation (QUBO) of the problem.
More details can be found in \cite{minor1}.
\begin{theorem}[Theorem 5.1 in \cite{minor1}]
If $\lambda_{ij} \ge \min\{w_i,w_j\}$ for all $ij \in \edge(G)$, then the maximum
  value of
  \begin{equation}
\oy(x_1,\ldots, x_n) = \sum_{i \in \ver(G)}w_i x_i - \sum_{ij \in \edge(G)}
  \lambda_{ij}x_ix_j
\label{eq:Y}
  \end{equation}
is the total weight of the MIS. 
In particular if $\lambda_{ij} > \min\{w_i,w_j\}$ for all
      $ij \in \edge(G)$, then $\wmis(G) = \{i \in \ver(G) : x^*_i = 1\}$,
where $(x^*_1, \ldots, x^*_n) = \argmax_{(x_1, \ldots, x_n) \in \{0,1\}^n}
\oy(x_1, \ldots, x_n)$.
\label{thm:mis}
\end{theorem}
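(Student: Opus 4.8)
The plan is to translate the objective into set language via the bijection between binary assignments $x \in \{0,1\}^n$ and vertex subsets $S = \{i \in \ver(G) : x_i = 1\}$, and then to argue that the penalty terms push every maximizer toward an independent set. First I would rewrite
\[
\oy(x) = \sum_{i \in S} w_i \;-\!\! \sum_{ij \in \edge(G),\, i,j \in S}\!\! \lambda_{ij} \;=\; \wt(S) - P(S),
\]
where $P(S)$ collects the penalties on edges internal to $S$. The immediate observation is that if $S$ is independent then $P(S)=0$ and $\oy(x)=\wt(S)$, so each independent set contributes exactly its own weight; in particular $\oy$ attains the value $\wt(\wmis(G))$ on any maximum-weight independent set.

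The key step is a vertex-removal argument showing that no non-independent set can exceed the best independent set. Suppose $S$ is not independent, so it contains an edge $ij$; relabel so that $w_i = \min\{w_i,w_j\}$. Setting $x_i=0$ lowers the weight term by $w_i$ but removes every penalty $\lambda_{ik}$ on edges $ik$ internal to $S$. Since all weights are positive, the hypothesis gives $\lambda_{ik} \ge \min\{w_i,w_k\} > 0$ for each such edge, so
\[
\oy(S\setminus\{i\}) - \oy(S) \;=\; -w_i + \!\!\sum_{k \in S,\, ik \in \edge(G)}\!\! \lambda_{ik} \;\ge\; -w_i + \lambda_{ij} \;\ge\; 0,
\]
the last step using $\lambda_{ij}\ge\min\{w_i,w_j\}=w_i$. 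Hence removing $i$ never decreases $\oy$, and iterating on any non-independent set (each step strictly shrinks $|S|$, so the process terminates) yields an independent set of value at least as large. This proves $\max_x \oy(x) = \max_{S\ \mathrm{indep}} \wt(S) = \wt(\wmis(G))$, the first claim.

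For the strict version I would simply note that when $\lambda_{ij} > \min\{w_i,w_j\}$ the displayed chain becomes a strict inequality, so every non-independent $S$ admits a strict improvement. Consequently any maximizer $x^*$ must correspond to an independent set, on which $\oy = \wt$; therefore $\{i : x^*_i = 1\}$ is a maximum-weight independent set, i.e. $\wmis(G)$, as asserted.

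The main subtlety I anticipate is the bookkeeping in the removal step: deleting $i$ discards all penalties incident to $i$ inside $S$, not merely $\lambda_{ij}$, so the argument hinges on confirming these extra terms are nonnegative before bounding the change from below by $-w_i+\lambda_{ij}$. This is exactly where positivity of the weights together with the hypothesis $\lambda_{ik}\ge\min\{w_i,w_k\}$ is used, and it is the only place the assumption on $\lambda$ enters.
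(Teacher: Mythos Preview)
Your argument is correct. The paper does not actually prove this theorem---it merely cites \cite{minor1} and remarks that ``the proof is quite intuitive in the way that one can think of $\lambda_{ij}$ as the energy penalty''; your vertex-removal argument is precisely the rigorous version of that penalty intuition, so the approaches agree.
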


Here the function $\oy$ is called the pseudo-boolean function for MIS,
where the boolean variable $x_i \in \{0,1\}$, for $i=1,\ldots, n$.  The proof is quite
intuitive in the way that one can think of $\lambda_{ij}$ as the {\em energy
  penalty} when there is an edge $ij \in \edge(G)$.
In this formulation, we only require $\lambda_{ij} > \min\{w_i,w_j\}$, and thus there is freedom in
choosing this parameter.

\subsection{MIS-Ising Hamiltonian}
By changing the variables ($x_i=\frac{1+s_i}{2}$ where $x_i \in
\{0,1\}, s_i \in \{-1,1\})$, it is easy to show that MIS is equivalent
to minimizing the following function, known as the {\em Ising energy function}:
\begin{eqnarray}
  \energy(s_1, \ldots, s_n) &=& \sum_{i \in \ver(G)} h_i s_i + \sum_{ij \in \edge(G)} J_{ij}s_is_j,
\end{eqnarray}
which is the 
eigenfunction of the following
 {\em Ising Hamiltonian}:
\begin{equation}
\ham_{\ms{Ising}} = \sum_{i \in \ver(G)} h_i \sigma^z_i + \sum_{ij \in \edge(G)} J_{ij}
\sigma^z_i \sigma^z_j
\label{eq:Ising}
\end{equation}
where $h_i = \sum_{j \in \nbr(i)}
  \lambda_{ij} - 2w_i$, (conversely $w_i = 1/2(\sum_{j \in \nbr(i)}
  J_{ij} - h_i)$), $J_{ij}=\lambda_{ij}$, $\nbr(i) =\{j: ij \in \edge(G)\}$,
for $i \in \ver(G)$.

Therefore, different $\lambda_{ij}$ in $\oy$  will correspond to different
$h_i, J_{ij}$ in $\ham$ (notice that $h_i$ is expressed in terms of $J_{ij}$). For convenience, we will refer to a Hamiltonian
in such a form as an {\em MIS-Ising} Hamiltonian.
A natural
question to ask is: is larger $J_{ij}$ ($=\lambda_{ij}$) better (in terms of the 
min-gap)? or is smaller $J_{ij}$
better? 
Intuitively,
larger $J$ penalizes the dependent sets. However, small $J$ allows 
low-energy dependent sets, which can be good if they are neighbors 
to the ground state as we investigate and explain in the sections below.

\subsection{Reduction from the Ising problem to MIS: Ising $\Longrightarrow$ MIS}
In this section, we recall a polynomial reduction to convert an
Ising problem to an MIS problem described in \cite{BorosHammer}. 
Thus, any Ising Hamiltonian can always be reduced to a
parameter-flexible MIS-Ising Hamiltonian. 
The
reduction is quite straightforward. The size of the reduced problem
graph is at most $O(n+m)$ of the original graph where $n$ is the number of vertices
and $m$ is the number of the edges.

The reduction consists of the following 3 steps.

{\bf Step 1}: We change the variables from the spin $s_i \in \{-1,+1\}$ to
the boolean variable $x_i \in \{0,1\}$, that is, from the Ising energy
to a pseudo-boolean function (QUBO). 

{\bf Step 2}: We represent the QUBO in a
{\em posiform}.
The binary variable $x_i$ and its complement $\bar{x_i}=1-x_i$ are
called together literals. Let $L = \{x_1,\bar{x_1}, \ldots, x_n,
\bar{x_n}\}$ denote the set of literals. 
A posiform of a pseudo-boolean function is a polynomial expression in
terms of all literals such that the coefficients are all positive:
$$\phi(x_1,\ldots,x_n) = \sum_{T \subset L}a_T \prod_{u \in T} u$$
where $a_T>0$, $T$ is non-empty and also it does not contain $\{u,
\bar{u}\}$ (otherwise the product will be zero). 
\paragraph{Example.} Suppose $\oy(x_1,x_2,x_3,x_4) = -5.5x_1 -3x_2-3x_3 + 4
x_1x_2 + 4x_1x_3 + 2x_2x_4 + 2 x_3x_4$ is the pseudo-boolean function.
One possible posiform of $\oy$ can be $\phi(x_1,
x_2,x_3,x_4) = -11.5 +5.5\bar{x_1} +3\bar{x_2}+3\bar{x_3} + 4
x_1x_2 + 4x_1x_3 + 2x_2x_4 + 2 x_3x_4$, by replacing $-x_i$ with
$(\bar{x_i} -1)$.

{\bf Step 3}: Given a posiform, we then associate to it a weighted
graph $G_{\phi}$,  called its {\em conflict graph}. Vertices of
$G_{\phi}$ correspond to the nontrivial terms of $\phi$, i.e.
${\mathcal{T}}=\{T \subset L : T\neq \emptyset, a_T>0 \}.$  To a vertex
$T \in {\mathcal{T}}$ we associate $a_T$ as its
weight. Two terms $T,T'$ are in conflict if there is a literal $u \in
T$ for which $\bar{u} \in T'$. The edges of $G_{\phi}$ correspond to the
conflicting pairs of terms. 
Since there are $O(n+m)$ terms in the posiform, where $n,m$ are the
number of vertices and edges of the original graph, there are $O(n+m)$
vertices in the conflict graph.

It was shown in Theorem 3 in \cite{BorosHammer}  that $\max_{(x_1,
  \ldots, x_n)\in \{0,1\}^n}
  \phi(x_1, \ldots, x_n) =
a_{\emptyset} + mis(G_{\phi})$.
Therefore by reducing the original Ising problem to an MIS problem, we
can express it with a parameter-flexible MIS-Ising Hamiltonian
as shown in Section 3.1.

\subsubsection*{Example: Reduction for the Loop Gadgets}
As an example, we apply the reduction to  the {\em loop gadgets}, the Ising instances that
were constructed by Tameem Albash,
to have small perturbative crossing min-gaps 
to show the evidence of quantum tunneling. We show that the resulting
MIS-Ising Hamiltonians, even without the need to change the parameters, no longer have an
anti-crossing (to show the
evidence of the quantum tunneling) and the min-gaps are large.
The loop-gadget  is shown in  Figure~\ref{fig:LG}. In the loop, all couplings
are ferromagnetic.
All local fields are zero except the two ends with $R-1$ and $-R$. All
coupling magnitudes are equal with $-R$, except a
pair with $-R/2$, whose sum is equal to the negative local field. 
By construction, the instance has a unique ground state,
$\ket{0\ldots0}$, and 2-fold degenerate first excited state:
$\ket{1\ldots11}$ and $\ket{1\ldots10}$.
The instances are shown to have exponential small min-gaps by a numerical
fitting (for size up to 20) to $\sim exp(-0.593n)$, as shown in  Figure~\ref{fig:LG}. 

\begin{figure}[h]
  \centering
$$
  \begin{array}{cc}
 \includegraphics[width=0.5\textwidth]{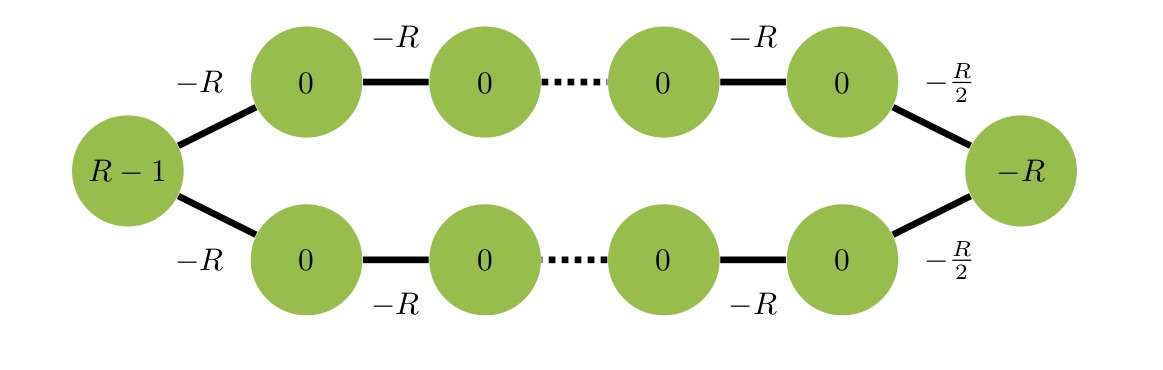} &
\includegraphics[width=0.3\textwidth]{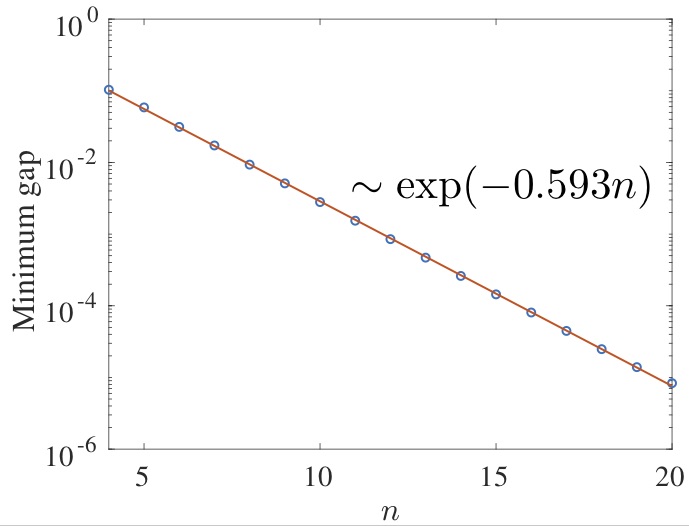}\\
(a) & (b)
  \end{array}
$$ 
 \caption{[Credit to Tameem Albash.](a) The loop gadget. In the loop, all couplings
are ferromagnetic, with the value indicated ($-R$ or $-R/2$).
All local fields are zero except the two ends with $R-1$ and $-R$. 
By construction, the instance has a unique ground state,
$\ket{00\ldots0}$, and 2-fold degenerate first excited state:
$\ket{1\ldots11}$ and $\ket{1\ldots10}$. 
(b) The min-gaps data is fit to  $\sim exp(-0.593n)$. (The loop
Hamiltonian used here is normalized by $R$ ($R=4$ in this example) so that the largest magnitude of any term in the Hamiltonian is 1.)
 }
  \label{fig:LG}
\end{figure}

The  Ising Hamiltonian for the loop gadget with $n=4$ is: 
\begin{align}
  \label{eq:2}
  \ham_{\ms{loop}} = (R-1) \sigma_1^z -R \sigma_4^z -R
  \sigma_1^z\sigma_2^z -R
  \sigma_1^z\sigma_3^z -R/2
  \sigma_2^z\sigma_4^z -R/2
  \sigma_3^z\sigma_4^z 
\end{align}
for some $R \ge 4$. 
The corresponding QUBO function is
\begin{align}
  \label{eq:4}
  \oy_{\ms{loop}} = -(3R-1)/2 x_1  - 3R/4 x_2 - 3R/4 x_3 + R x_1 x_2 +
  R x_1x_3 + R/2 x_2x_4 + R/2 x_3 x_4
\end{align}
One posiform can be:
\begin{align}
  \label{eq:5}
  \phi_{\ms{loop}} = \ms{constant} + (3R-1)/2 \bar{x_1}  + 3R/4 \bar{x_2} + 3R/4 \bar{x_3} + R x_1 x_2 +
  R x_1x_3 + R/2 x_2x_4 + R/2 x_3 x_4
\end{align}
\begin{figure}[h]
  \centering
$$
  \begin{array}[h]{ll}
\includegraphics[width=0.33\textwidth]{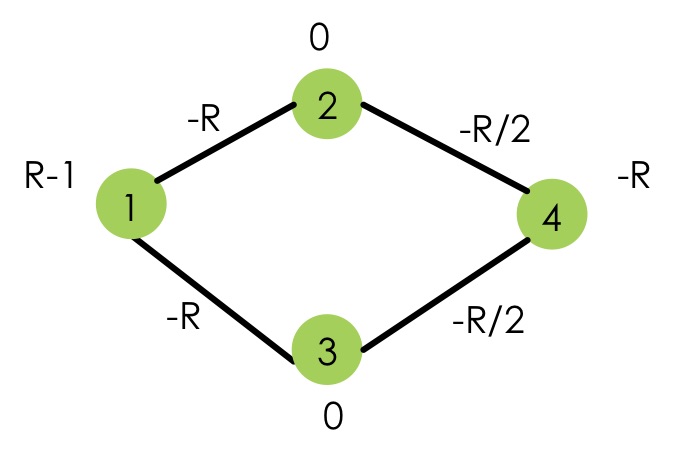} &
 \includegraphics[width=0.33\textwidth]{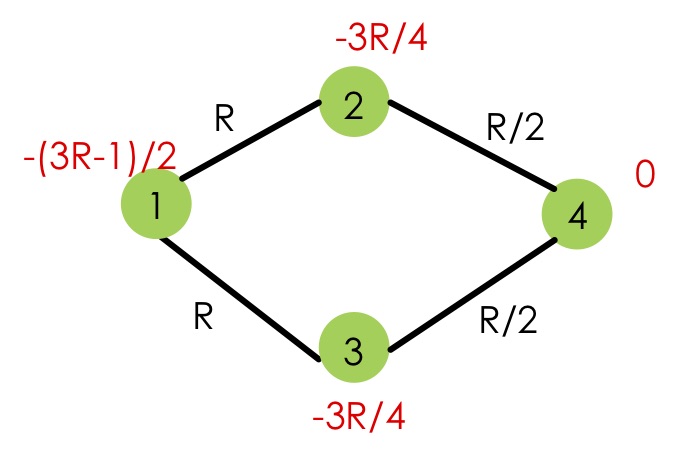} \\
(a) & (b)\\
 \includegraphics[width=0.4\textwidth]{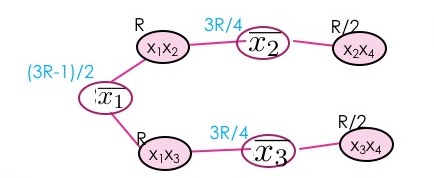} &\\
(c) &
 \end{array} 
$$
 \caption{
(a) The loop-gadget for $n=4$, the corresponding Ising Hamiltonian  in Eq.(\ref{eq:2}).
(b) The corresponding QUBO in Eq.(\ref{eq:4}).
(c)
The corresponding conflict graph of $\phi$ in
    Eq.~(\ref{eq:5}). The MIS of the graph comprises the four vertices in
    pink, with total weight $3R$. That is, the MIS is
    $x_1x_2=x_1x_3=x_2x_3=x_2x_4=1$ implying $x_1=x_2=x_3=x_4=1$,
    the corresponding ground state. The
    three white vertices $\{\bar{x_1}, \bar{x_2}, \bar{x_3}\}$ with weight
    $3R-1/2$ is a local maximum, corresponding to the degenerate first
    excited state with $x_1=x_2=x_3=0, x_4=0/1$.}
  \label{fig:conflict}
\end{figure}
The conflict graph associated with $\phi$ is shown in
Figure~\ref{fig:conflict}. 
Notice that in the corresponding MIS formulation,
there is no longer a 
degenerate first excited state. In general, an $n$ loop is converted to a chain of $2n-1$ vertices,
without the first excited degeneracy.
There is no longer
an anti-crossing, and the min-gap is large. 




\section{Low-Energy Neighboring Eigenstates (LENS)} 
In this section, we describe our LENS idea.
First, we recall the basics of the QA algorithm. 
The algorithm relies
on the fact that if the system is initialized in the ground state of
the driver Hamiltonian, 
the state $\ket{E_0(s)}$ evolves according to the
Schrodinger equation, will
remain as the instantaneous ground state of the Hamiltonian $\ham(s)$ if
the evolution is slow enough, according to the Adiabatic Theorem.
The idea of the QA algorithm is then, that if we encode the
problem into the final  Hamiltonian, at the end of the quantum
evolution we will get the solution (ground state). 
From the algorithmic point of view, 
how does the algorithm or the quantum evolution actually work? In particular, how does
$a_k(s) (= |\bra{E_k(1)}{E_0(s)}\rangle|^2)$, k=0,1, the weight (or the overlap)
of the solution state ($\ket{\GS}=\ket{E_0(1)}$) or the first excited state ($\ket{\FS}=\ket{E_1(1)}$) with the instantaneous 
ground state $\ket{E_0(s)}$ evolve?
How does the driver Hamiltonian affect 
the evolution? 
At $s=1$, we know $a_0(s) =1,a_1(s)=0$. 
Supposing the initial
ground state is the uniform (with positive amplitudes) superposition
state, i.e. $\ket{E_0(0)}=\frac{1}{\sqrt{2^n}} \sum_{i=0}^{2^n-1}
\ket{i}$, we have $a_0(0) = \frac{m_0}{\sqrt{2^n}}, a_1(0) =
\frac{m_1}{\sqrt{2^n}}$, where $m_0$ and $m_1$ are the number of
degenerate states in $\ket{\GS}$ and $\ket{\FS}$ respectively.
Supposing  there is no anti-crossing, what are the factors of the
algorithm that will affect the change in the
value of $a_0(s)$ and $a_1(s)$? 
Our observation is that the amount of change in $a_k(s)$ depends on the
corresponding state $\ket{E_k(1)}$'s {\em neighboring states} and their energy
values (w.r.t the problem Hamiltonian).  

Next, we define what we mean by neighboring eigenstates. The
neighborhood of an eigenstate depends on the driver Hamiltonian. 
As an example, consider the $X$-driver 
$\ham_{X}= - \sum_{i \in \ver(G)}  \sigma_i^x$ as the driver
Hamiltonian.
Recall that $\sigma_i^x$ flips the $i$th qubit, that is, $\sigma_i^x
\ket{x_1x_2\ldots x_i \ldots x_n} = \ket{x_1x_2\ldots \bar{x_i} \ldots
  x_n}$.
We define $\nbr_{\ham_\ms{driver}}(\ket{\phi})=\{\ket{\psi}:
\ket{\psi}=\ms{Op}_i(\ket{\phi})\}$ where $\ham_\ms{driver}=\sum_i
\ms{Op}_i$.
In other words, we have $\bra{\phi} \ham_\ms{driver} \ket{\psi} =1$ iff
$\ket{\psi} \in \nbr_{\ham_\ms{driver}}(\ket{\phi})$.
Thus, by this definition, 
$\nbr_{\ham_\ms{X}}$ consists of the single-bit flip neighborhood of
the state. 
For example, 
$\nbr_{\ham_\ms{X}}(\ket{10101}) = \{\ket{1010\underline{0}},
\ket{101\underline{1}1},
\ket{10\underline{0}01}, \ket{1\underline{1}101},
\ket{\underline{0}0101}
\}$.

Under the assumption that  the initial
ground state is the uniform (with positive amplitudes) superposition
state (i.e. $\ket{E_0(0)}=\frac{1}{\sqrt{2^n}} \sum_{i=0}^{2^n-1}
\ket{i}$), before an anti-crossing,
the ground state $\ket{\GS}=\ket{E_0(1)}$ gets {\em positive contributions}
from its higher energy neighboring states and $a_0(s) (= |\bra{\GS}{E_0(s)}\rangle|^2)$ increases;
similarly, the first excited state $\ket{\FS}=\ket{E_1(1)}$ gets {\em positive contributions}
from its higher energy neighboring states and $a_1(s) (= |\bra{\FS}{E_0(s)}\rangle|^2)$ increases.
The contribution is proportional to
the energy difference between the two states; the contribution is greater if the energy
of the neighboring state is lower (closer to the state's energy value). 
Therefore, it is the low-energy neighboring states (LENS) that will
affect the change in $a_k(s)$. We denote the restricting neighborhood
by $\lens$, that is, $\lens(\ket{\phi}) =
\nbr(\ket{\phi})|_{\ms{low-energy}}$.
Notice that LENS depends on the problem Hamiltonian and the driver
Hamiltonian, but not the evolution path (with the assumption that
there is not yet an anti-crossing occurrence). 
Initially, $a_0(0) = \frac{m_0}{\sqrt{2^n}}, a_1(0) =
\frac{m_1}{\sqrt{2^n}}$, where $m_0$ and $m_1$ are the number of
degenerate states in $\ket{\GS}$ and $\ket{\FS}$ respectively.
As $s$ increases,  if $\ket{\GS}$
has more LENS than $\ket{\FS}$ does,  $a_0(s) $ will increase faster and become dominant
during the evolution. In this case, there is no anti-crossing and the
min-gap is large. However, if $\ket{\FS}$ has more LENS than
$\ket{\GS}$ does, $a_1(s)$ will increase
faster and
become dominant before an anti-crossing. In this case,
there is a small min-gap.

We illustrate our LENS idea in detail in the following
examples. While our examples are constructed based
on this idea,  the results from these examples also serve to
reinforce the correctness of the idea.

\subsection{Example 1: Changing $J$ changes the min-gap}
We construct this example to show that by changing the parameters of the Ising problem Hamiltonian
one can change the quantum evolution (from one that has an
anti-crossing to one that does not, or vice-versa) and thus drastically change the min-gap. 
The input  is a  chain of $5$ weighted
vertices, shown in Figure \ref{fig:path5}. 
\begin{figure}[h]
  \centering
  \includegraphics[width=0.6\textwidth]{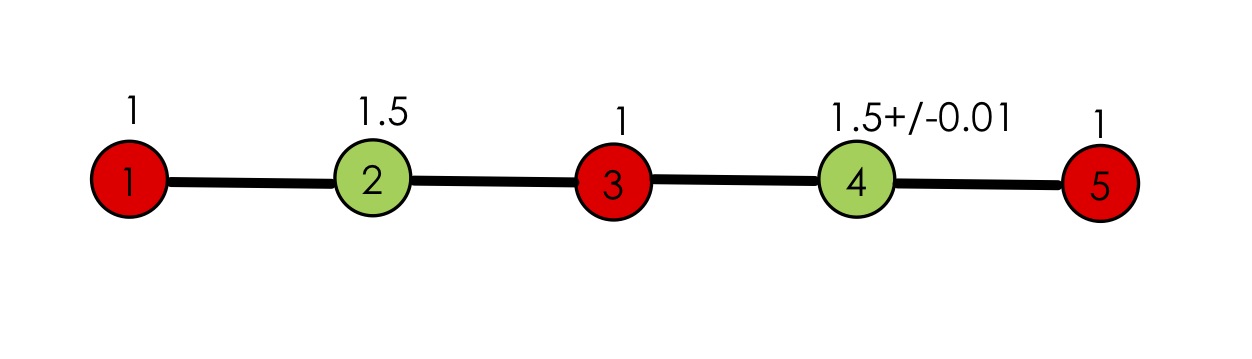}
  \caption{A chain  of 5 weighted vertices. The weight of each vertex is
    indicated above the vertex, where vertex 4 has two possible
    values, either $w_4=1.5+0.01=1.51$ or $w_4=1.5-1.001=1.49$.
Both $\{1,3,5\}$ (in red)  and $\{2,4\}$ (in green) are
    the maximal independent set --- weight$(\{1,3,5\})=3$;
    weight$(\{2,4\})=3.01$ when $w_4=1.51$ or $2.99$ when $w_4=1.49$.
Thus, 
$\mis=\{1,3,5\}$ or 
$\ket{\GS}=\ket{10101}$ for
$w_4=1.49$;  while $\mis=\{2,4\}$ or $\ket{\GS}=\ket{01010}$ for $w_4=1.51$. 
They are complementary to each other. 
}
  \label{fig:path5}
\end{figure}
There are two instances of
this graph, one with $w_4=1.49$, another with $w_4=1.51$. 
Their
solutions are complementary to each other by construction, namely,
$\mis=\{1,3,5\}$ or 
$\ket{\GS}=\ket{10101}$ for
$w_4=1.49$;  while $\mis=\{2,4\}$ or $\ket{\GS}=\ket{01010}$ for $w_4=1.51$. 
Furthermore, these two instances are constructed so that  the ground
state of one instance is the first excited state of the other, and
vice versa.
Thus, the opposite result (the presence vs absence of an
anti-crossing) is expected in the two instances.

In particular, 
the two instances have the opposite results for the min-gap when
increasing  the energy penalty $J$. 
More specifically, for the instance with $w_4=1.49$, when $J=1.52$,
there is an anti-crossing resulting in a small min-gap, see
Figure~\ref{fig:w149J152} for the detailed explanation; but as $J$
increases, the anti-crossing disappears and the min-gap is large,
e.g. see
Figure~\ref{fig:w149J4} for $J=4$.
The opposite results are observed for the instance with $w_4=1.51$,
see Figure~\ref{fig:w151} for the comparison of the results for
$J=1.52$ (absence of the anti-crossing) vs $J=4$ (presence of the
anti-crossing).
The results of various $J$ $(1.52,4,10,100)$\footnote{We purposely
  left out the results for $J=2$ as an exercise for the curious reader.} for both instances with
$w_4=1.49$ vs $w_4=1.51$ are compared side-by-side in Figure~\ref{fig:diff4} (only
the evolutions of $a_k$ are shown). For
    $w_4=1.49$ instance (left in Figure ~\ref{fig:diff4}), increasing $J$  increases min-gap
    (anti-crossing disappears); while for
    $w_4=1.51$ instance (right in Figure ~\ref{fig:diff4}), increasing $J$ decreases min-gap
    (anti-crossing appears).
\begin{figure}[t]
  \centering
  \includegraphics[width=0.6\textwidth]{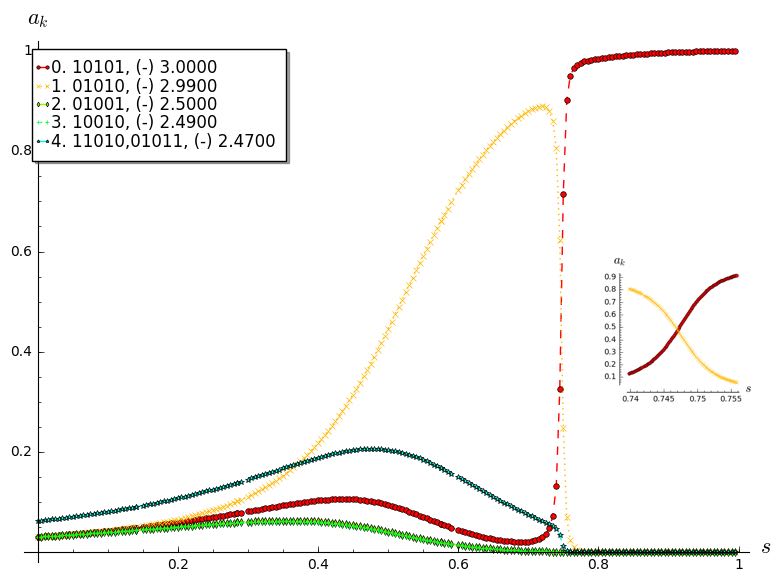} \\
(a)  Evolution of $a_k(s)$, $k=0,1,\ldots, 4$\\
  \includegraphics[width=0.6\textwidth]{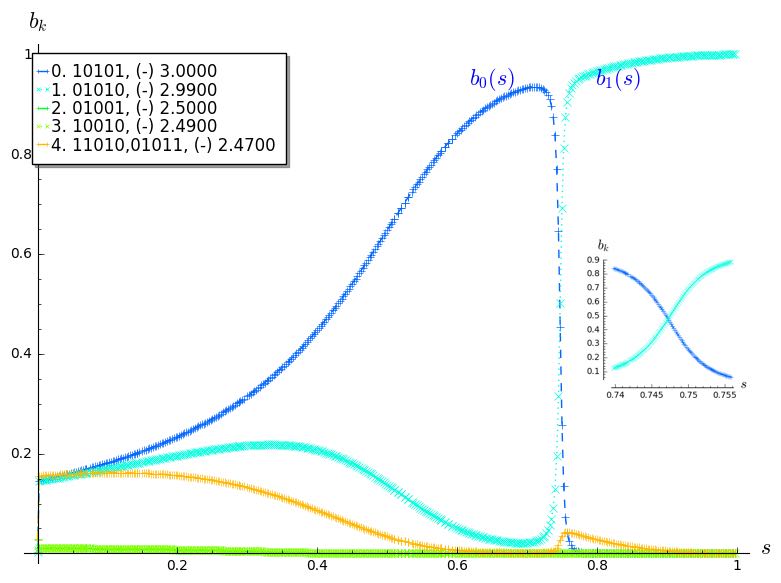} \\
(b) Evolution of $b_k(s)$, $k=0,1,\ldots, 4$
  \caption{For the instance with $w_4=1.49$ and $J=1.52$. (a) Evolution of $a_k(s)$, where $a_k(s) =|\bra{E_k(1)}{E_0(s)}\rangle|^2$ is the overlap  of the
    $\ket{E_k(1)}$ with the ground state wavefunction
      $\ket{E_0(s)}$. The x-axis is the time $s$, and the y-axis is
      $a_k$. The lowest 5 levels ($0 \le k \le 4$) are shown. 
      The one in red is $a_0(s)$ and the one in orange is $a_1(s)$. 
In this instance,  $s^*=0.7479, \gap=0.0018$. 
the inset shows the occurrence of a
$(0.15,0)$-Anti-crossing (with $\delta=0.008$)
at $s^*$. In particular, within the zoom interval,  $a_0(s)$ increases
from $\sim 0.15$ to $\sim 0.85$; while $a_1(s)$ decreases from  $\sim
0.85$ to $\sim 0.15$. At $s^*$, $a_0(s) =a_1(s)=0.5$. 
(b) Evolution of $b_k(s)$, where $b_k(s) = |\bra{E_k(1)}{E_1(s)}\rangle|^2$ is the overlap  of the
    $\ket{E_k(1)}$ with the first excited state wavefunction
      $\ket{E_1(s)}$. 
      The one in blue is $b_0(s)$ and the one in cyan is $b_1(s)$.  In
      the inset, $b_1(s)$ increases
from $\sim 0.15$ to $\sim 0.85$; while $b_0(s)$ decreases from  $\sim
0.85$ to $\sim 0.15$. At $s^*$, $b_0(s) =b_1(s)=0.5$. Together the
above two plots demonstrate the presence of the anti-crossing at
$s^*=0.7479$ where the min-gap occurs. }
  \label{fig:w149J152}
\end{figure}

\begin{figure}[t]
  \centering
  \includegraphics[width=0.6\textwidth]{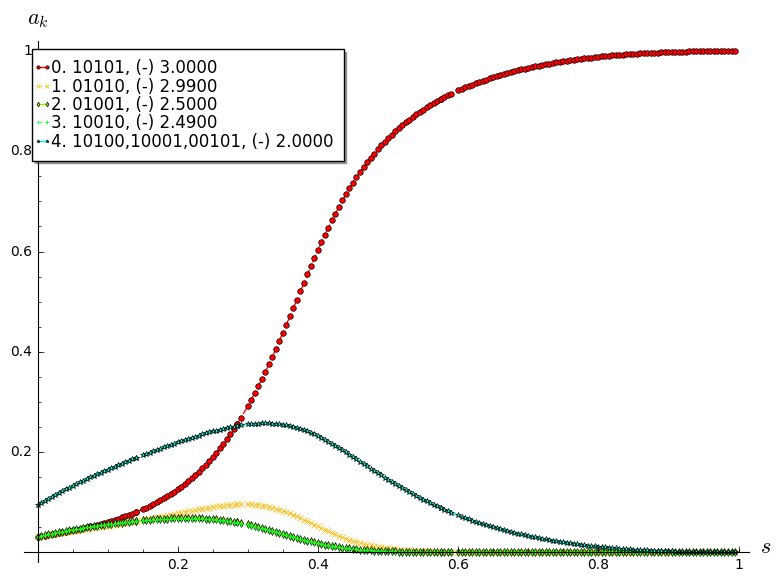} \\
(a)  Evolution of $a_k(s)$\\
  \includegraphics[width=0.6\textwidth]{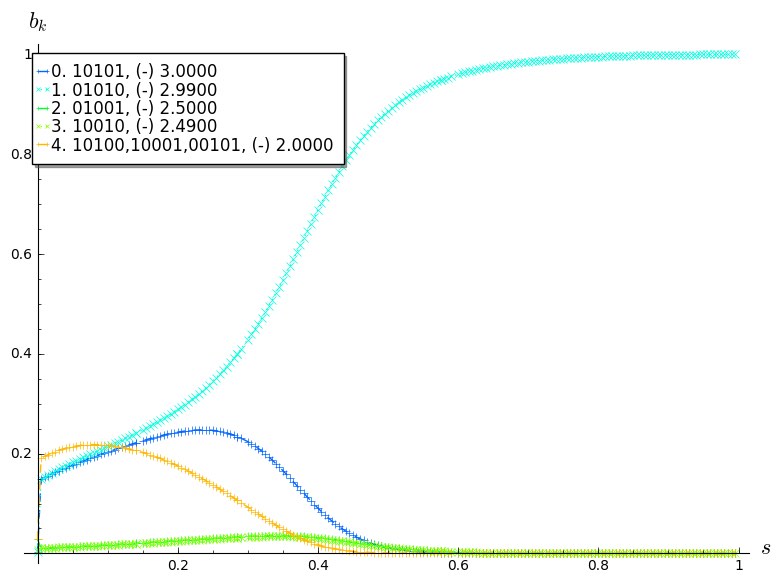} \\
(b) Evolution of $b_k(s)$
  \caption{For the instance with $w_4=1.49$ and $J=4$. (a) Evolution of $a_k(s)$, where $a_k(s) =|\bra{E_k(1)}{E_0(s)}\rangle|^2$ is the overlap  of the
    $\ket{E_k(1)}$ with the ground state wavefunction
      $\ket{E_0(s)}$. 
In this instance,  $s^*=0.935, \gap=0.0387$. There is no anti-crossing
in this case, as one can see $a_0(s)$ increases steadily to $1$ at the
end. 
(b) Evolution of $b_k(s)$, where $b_k(s) = |\bra{E_k(1)}{E_1(s)}\rangle|^2$ is the overlap  of the
    $\ket{E_k(1)}$ with the first excited state wavefunction
      $\ket{E_1(s)}$.}
  \label{fig:w149J4}
\end{figure}

\begin{figure}[t]
  \centering
$$
  \begin{array}[h]{cc}
   \includegraphics[width=0.4\textwidth]{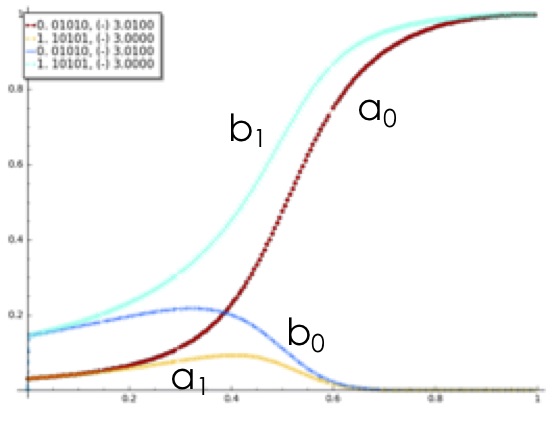} &
  \includegraphics[width=0.4\textwidth]{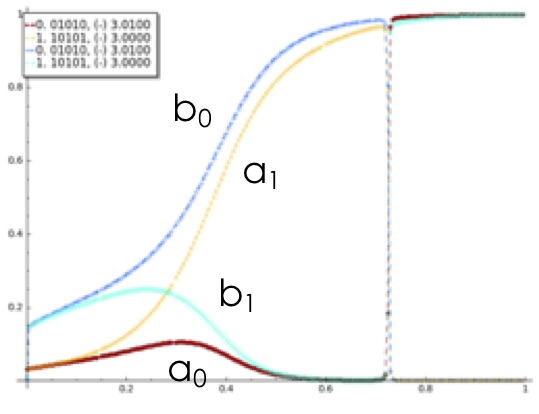} \\
(a) J=1.52 & (b) J=4
  \end{array}
$$ 
 \caption{Evolution of $a_0(s), a_1(s), b_0(s), b_1(s)$ for the
    instance with $w_4=1.51$, where 
$a_k(s) =|\bra{E_k(1)}{E_0(s)}\rangle|^2$ is the overlap  of the
    $\ket{E_k(1)}$ with the ground state wavefunction
      $\ket{E_0(s)}$, ($a_0$ in red, $a_1$ in orange), 
 $b_k(s) = |\bra{E_k(1)}{E_1(s)}\rangle|^2$ is the overlap  of the
    $\ket{E_k(1)}$ with the first excited state wavefunction
      $\ket{E_1(s)}$, ($b_0$ in blue, $b_1$ in cyan).  (a) $J=1.52$: there is no
    anti-crossing. $a_0(s)$ and $b_1(s)$ keep increasing. 
(b) $J=4$: there is an anti-crossing. 
}
  \label{fig:w151}
\end{figure}

\begin{figure}[t]
  \centering
$$
  \begin{array}[c]{l|l|l}
   & w_4=1.49, mis=\{1,3,5\} &w_4=1.51, mis=\{2,4\}\\
\hline\\
J=1.52 & s^*=0.7479, \gap=0.0018 & s^*=0.95, \gap=0.03889\\
& \includegraphics[width=0.36\textwidth]{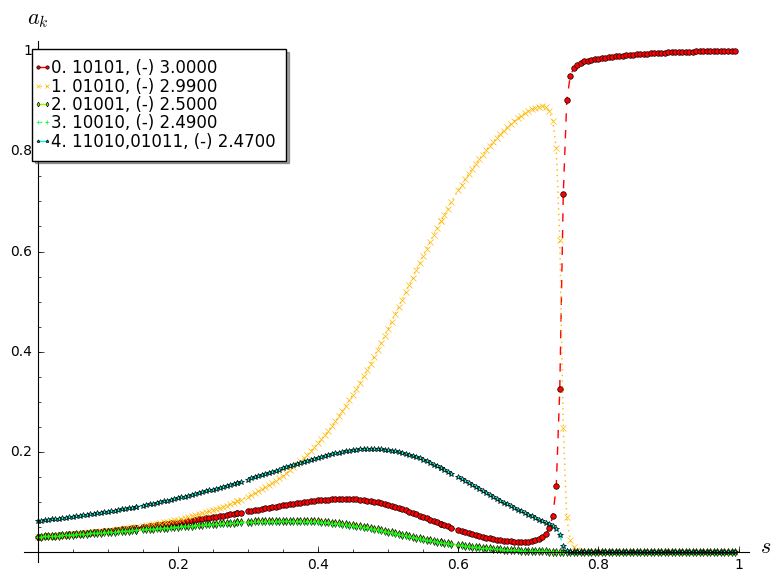} & \includegraphics[width=0.36\textwidth]{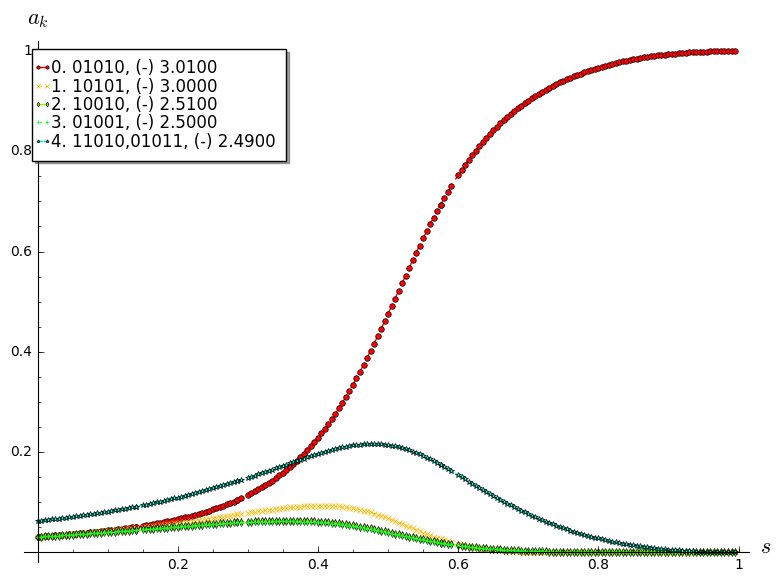}\\
\hline\\
J=4 & s^*=0.94, \gap=0.0387 & s^*=0.7262, \gap=3.8e-4\\
& \includegraphics[width=0.36\textwidth]{ak-w149-J4.jpg} & \includegraphics[width=0.36\textwidth]{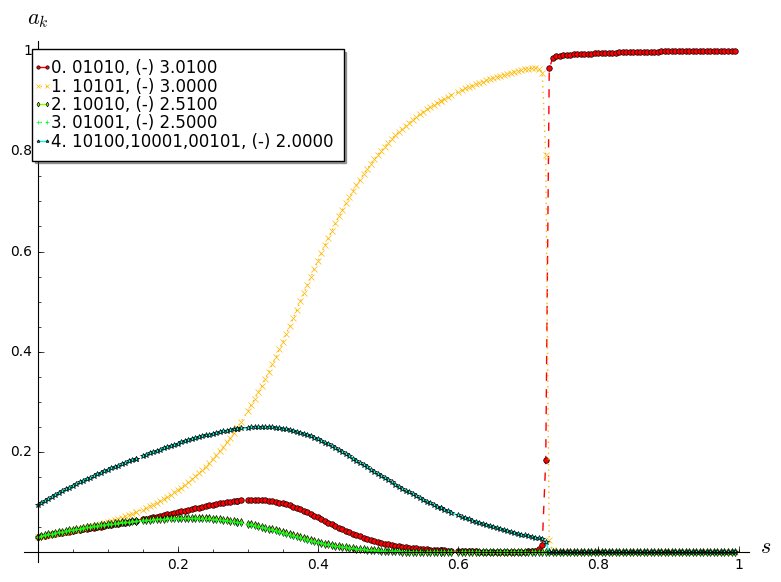}\\
\hline\\
J=10 & s^*=0.95, \gap=0.0389 & s^*=0.7522, \gap=1.2e-4\\
& \includegraphics[width=0.36\textwidth]{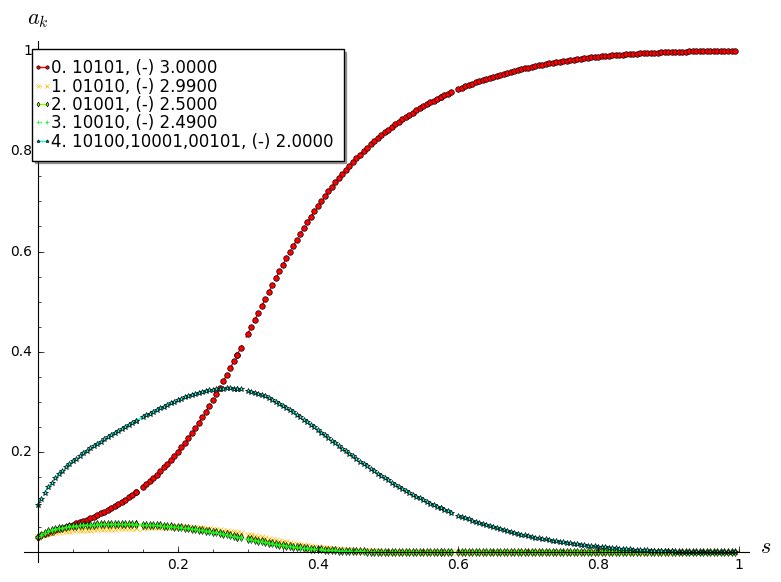}
& \includegraphics[width=0.36\textwidth]{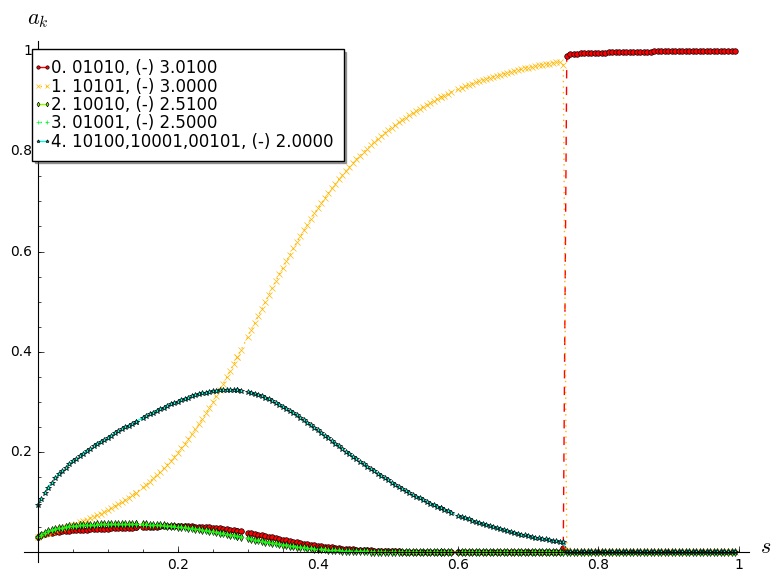}\\
\hline\\
J=100 & s^*=0.95, \gap=0.0391 & s^*=0.76147, \gap=7.136e-5
  \end{array}
$$
  \caption{Evolution of $a_k(s)$  for two different instances
    ($w_4=1.49$ vs $w_4=1.51$) with various $J$.  For
    $w_4=1.49$ instance (left), increasing $J$  increases min-gap
    (anti-crossing disappears); while for
    $w_4=1.51$ instance (right), increasing $J$ decreases min-gap
    (anti-crossing appears). Notice that for the same $J$ the results
    for the two instances with ($w_4=1.49$ vs $w_4=1.51$) are
    opposite (presence vs absence of the anti-crossing).}
  \label{fig:diff4}
\end{figure}

We now apply the LENS idea to explain our results. First, let us try
to understand the result for the instance with $w_4=1.49,
J=1.52$, where there is an anti-crossing as shown in
Figure~\ref{fig:w149J152}. 
In this example, $\ket{\GS}=\ket{10101}$ and $\ket{\FS}=\ket{01010}$.
Among the 5 neighbors of $\ket{\GS}=\ket{10101}$, three of them 
$\{1010\underline{0}, 10\underline{0}01, \underline{0}0101\}$ are
independent sets, of energy value $(-)2$. The other two $\{1\underline{1}101,
101\underline{1}1\}$ are dependent sets of higher energy.
Similarly,  among the 5 neighbors of $\ket{\FS}=\ket{01010}$, there are two
independent sets, and three
dependent sets. The lowest 17 eigenstates of this instance are shown in
Figure~\ref{fig:nbrs-example}, where $\nbr_{\ham_\ms{X}} (\ket{0}) = \{\ket{5}, \ket{11},
\ket{12}\}$ and  $\nbr_{\ham_\ms{X}} (\ket{1}) = \{\ket{4}, \ket{9},\ket{10},
\ket{16}\}$. When restricting to the low-energy, we have
$\lens_{\ham_\ms{X}} (\ket{0}) = \{\ket{5}\}$
and $\lens_{\ham_\ms{X}} (\ket{1}) = \{\ket{4}\}$.
Since $\ket{4}$ is
lower than $\ket{5}$, we say $\ket{\FS}$ has more LENS than
$\ket{\GS}$ does.  In this case, $a_1(s)$ increases faster as $s$ increases
and becomes dominant before the anti-crossing.

However, when $J$ increases to $4$, the ranking of the neighboring
states is changed. The lowest 17 eigenstates of this instance are shown in
Figure~\ref{fig:nbrs-example} (b).
Now we have $\nbr_{\ham_\ms{X}} (\ket{0}) = \{\ket{4}, \ket{15},
\ket{16}\}$ and  $\nbr_{\ham_\ms{X}} (\ket{1}) = \{\ket{5}, \ket{6},\ket{9},
\ket{17}\}$. When restricting
to the low-energy, we have $\lens_{\ham_\ms{X}} (\ket{0}) = \{\ket{4}
\}$ and  $\lens_{\ham_\ms{X}} (\ket{1}) = \{\ket{5},\ket{6}\}$.
Thus, $\ket{\GS}$ has more LENS than $\ket{\FS}$ does, and $a_0(s)$ increases steadily and
there is no anti-crossing, as shown in Figure~\ref{fig:w149J4}.

\begin{figure}[h]
  \centering
$$
  \begin{array}[h]{cc}
    \includegraphics[width=0.4\textwidth]{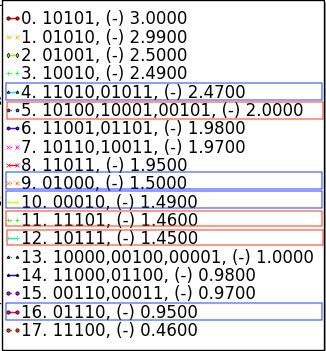} &
  \includegraphics[width=0.4\textwidth]{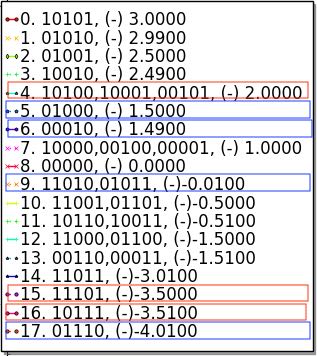}\\
(a) & (b)
\end{array}
$$
  \caption{The lowest 17 eigenstates of the instance with (a) $w_4=1.49$
    and $J=1.52$ and (b) $w_4=1.49$
    and $J=4$. The neighboring states of the ground state $\ket{0}$
    (the first excited state $\ket{1}$, resp.)
    are in red (blue, resp.) rectangles.
For (a), $\nbr_{\ham_\ms{X}} (\ket{0}) = \{\ket{5}, \ket{11},
\ket{12}\}$ and  $\nbr_{\ham_\ms{X}} (\ket{1}) = \{\ket{4}, \ket{9},\ket{10},
\ket{16}\}$.
For (b), $\nbr_{\ham_\ms{X}} (\ket{0}) = \{\ket{4}, \ket{15},
\ket{16}\}$ and  $\nbr_{\ham_\ms{X}} (\ket{1}) = \{\ket{5}, \ket{6},\ket{9},
\ket{17}\}$.
}
  \label{fig:nbrs-example}
\end{figure}

In general, by increasing $J$, we increase the energy of the
dependent set states, e.g.,  $\{\ket{11010},\ket{01011}\}$, which are the neighboring states
to $\ket{01010}$. 
If $J$ is increased such that
$\{\ket{11010},\ket{01011}\}$ becomes the high-energy state (from
being the low-energy state), $\ket{01010}$ will then have less LENS.
If $\ket{01010}$ is the ground state as in
$w_4=1.51$ case, this will reduce the weight of $a_0(s)$ but increase the
weight of $a_1(s)$ and force an anti-crossing. Conversely, if
$\ket{01010}$ is the first excited state as in
$w_4=1.49$ case, this will reduce the weight of $a_1(s)$ and $a_0(s)$ will
keep increasing, and there is no anti-crossing. 
In our above example, for $J=1.52$,
both $\{\ket{11010},\ket{01011}\}$ are low in energy in eigenstate
$\ket{4}$, which is a neighboring state to $\ket{\FS}$. Thus,
$\ket{\FS}$ has more LENS than $\ket{\GS}$ does,
$a_1(s)$ increases  faster as $s$ increases
and becomes dominant before the anti-crossing.
However, for $J=4$, their energy increases to the eigenstate
$\ket{9}$, and $\{\ket{10100}, \ket{10001}, \ket{00101}\}$ ,
which are the neighboring state to $\ket{\GS}$,
becomes $\ket{4}$. 
Thus, $\ket{\GS}$ has more LENS than $\ket{\FS}$, and $a_0(s)$ increases steadily and
there is no anti-crossing. 
As $J$ keeps increasing, it continues to increase the energy value of some dependent sets,
but the low-energy neighborhood structure remains the same, and thus the evolution
remains the same shape with a slight change in the gap size. 

For a fixed $J$, we compare the two different  $w_4$ instances (each row
in Figure \ref{fig:diff4}).  Their results are opposite, one with an
anti-crossing, and one without an anti-crossing. As we mentioned, this
is to be expected. The ground state for one is the first excited state
of the other, and vice versa.
If there is no anti-crossing for one, 
the other will have an anti-crossing
 because the ground state of one instance will become
the  first excited state of the other instance to compete with its complement.


To further verify the
neighborhood part of our LENS idea, we replace the $X$-driver
$\ham_{X}= - \sum_{i \in \ver(G)}  \sigma_i^x$ with a (stoquastic) $XX$-driver
$\ham_{XX}= - \sum_{i \in \ver(G)}  \sigma_i^x - \sum_{ij \in
  \edge(G)} \sigma_i^x \sigma_j^x$ which has the uniform (with
positive amplitudes) superposition state as the  initial ground
state, and thus our LENS idea applies. 
By changing from the $X$-driver to the $XX$-driver, the
 neighborhood will also include the two-bit flip neighbors. 
For example, 
 the
low energy states $\ket{\underline{10}010}$ and $\ket{010\underline{01}}$ become the neighboring
(2-bit flip)
states of $\ket{01010}$. If $\ket{01010}$ is the ground state as in
$w_4=1.51$ case, $\ket{\GS}$ will have more LENS, $a_0(s)$ will
increase faster, and there is no anti-crossing. 
But if $\ket{01010}$ is the first excited state as in the
$w_4=1.49$ case, 
$\ket{\FS}$ will have more LENS, and 
this will reduce the weight of $a_0(s)$ but increase the
weight of $a_1(s)$ and force an anti-crossing.
For $w_4=1.51$, the opposite results when using
$X$-driver vs $XX$-driver are as shown in Figure \ref{fig:XvsXX}.
However, for the same reason, $XX$-driver can also decrease the min-gap
by introducing an anti-crossing while there is no anti-crossing with
$X$-driver. 
See Figure~\ref{fig:XvsXX-w149} for the results for $w_4=1.49$, $J=4, 10$. 
More discussion on the $XX$-driver Hamiltonians (both stoquastic and
non-stoquastic) and the driver graphs can be found in  \cite{driver2}.


\begin{figure}[t]
  \centering
$$
  \begin{array}[c]{l|l|l}
   & \mbox{X-driver: }  \ham_{X}= - \sum_{i \in \ver(G)}  \sigma_i^x &
   \mbox{XX-driver: } \ham_{XX}= - \sum_{i \in \ver(G)}  \sigma_i^x - \sum_{ij \in
  \edge(G)} \sigma_i^x \sigma_j^x\\
\hline\\

J=4 & s^*=0.7262, \gap=3.8e-4 & s^*=0.965, \gap=0.03928\\
& \includegraphics[width=0.36\textwidth]{ak-w151-J4.jpg} & \includegraphics[width=0.36\textwidth]{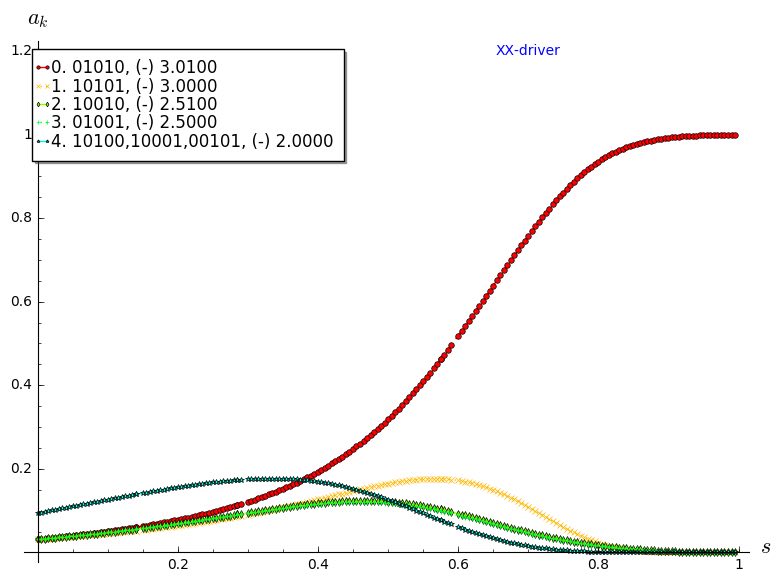}\\
\hline\\
J=10 & s^*=0.7522, \gap=1.2e-4 & s^*=0.965, \gap=0.039322\\
& \includegraphics[width=0.36\textwidth]{ak-w151-J10.jpg}
& \includegraphics[width=0.36\textwidth]{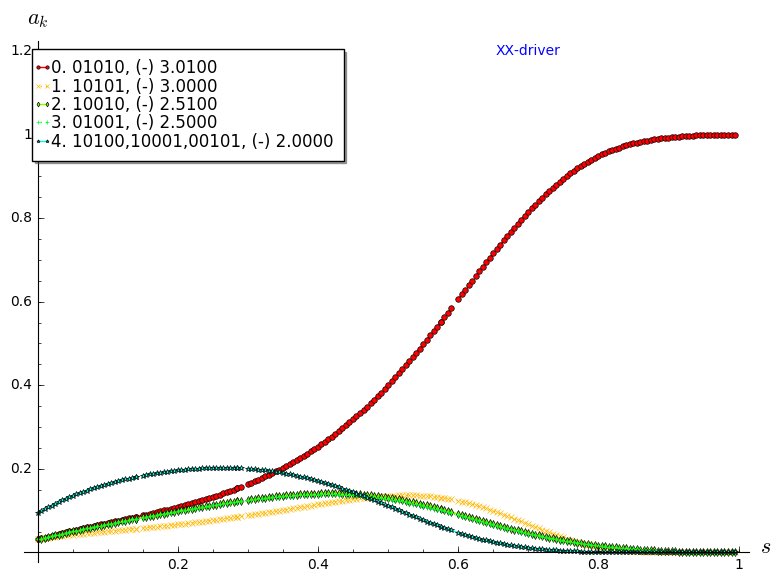}\\
  \end{array}
$$
  \caption{X-driver vs XX-driver for instance $w_4=1.51$. 
The anti-crossing is avoided with the $XX$-driver. This is because
$\ket{2}=\ket{\underline{10}010}$ and $\ket{3}=\ket{010\underline{01}}$ 
become $\lens_{\ms{H_{XX}}}(\ket{\GS})$ where
$\ket{\GS}=\ket{01010}$. Thus, $\ket{\GS}$ has more LENS than
$\ket{\FS}$ does.
}
  \label{fig:XvsXX}
\end{figure}


\begin{figure}[t]
  \centering
$$
  \begin{array}[c]{l|l|l}
& \mbox{X-driver: }  \ham_{X}= - \sum_{i \in \ver(G)}  \sigma_i^x &
   \mbox{XX-driver: } \ham_{XX}= - \sum_{i \in \ver(G)}  \sigma_i^x - \sum_{ij \in
  \edge(G)} \sigma_i^x \sigma_j^x\\
\hline\\

J=4 & s^*=0.94, \gap=0.0387 & s^*=0.82375, \gap=0.016349] \\
& \includegraphics[width=0.36\textwidth]{ak-w149-J4.jpg} & \includegraphics[width=0.36\textwidth]{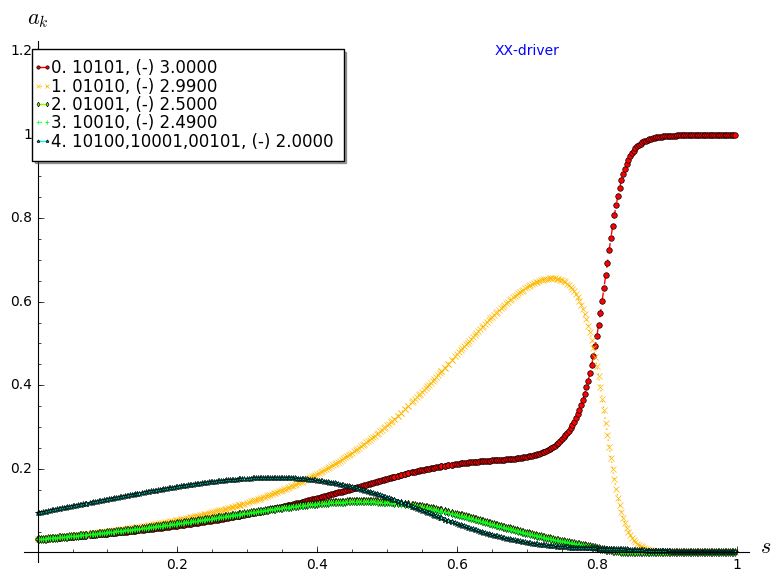}\\
\hline\\
J=10 & s^*=0.95, \gap=0.0389 & s^*=0.81759, \gap=0.013135\\
& \includegraphics[width=0.36\textwidth]{ak-w149-J10.jpg}
& \includegraphics[width=0.36\textwidth]{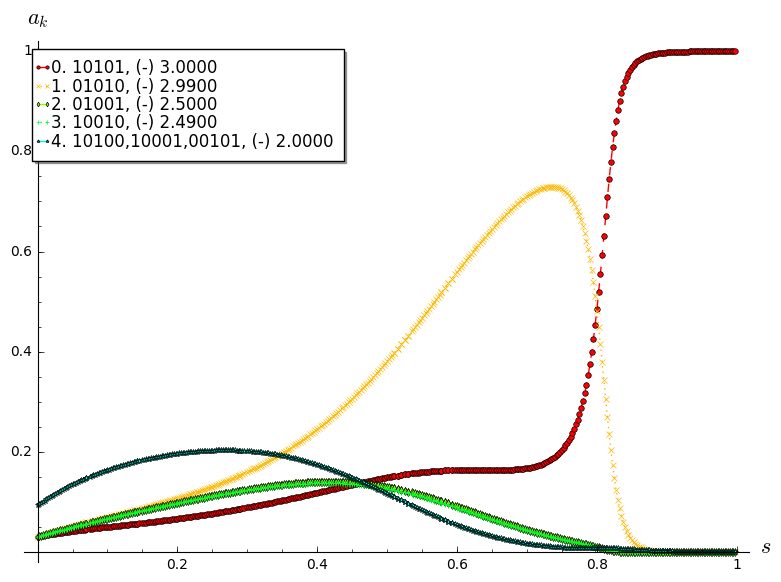}\\
  \end{array}
$$
  \caption{X-driver vs XX-driver for instance $w_4=1.49$. 
The $XX$-driver in this case decrease the min-gap with a weak
anti-crossing. This is because
$\ket{2}=\ket{\underline{10}010}$ and $\ket{3}=\ket{010\underline{01}}$ 
become $\lens_{\ms{H_{XX}}}(\ket{\FS})$ where
$\ket{\FS}=\ket{01010}$.
It is a weak anti-crossing because the min-gap does not happen at $0.5$
anti-crossing point, and also the left boundary does not sum up to
one: there are other non-negligible states besides $\ket{\GS}$ and $\ket{\FS}$.
}
  \label{fig:XvsXX-w149}
\end{figure}

\subsection{Example 2: Anti-crossings remain}
We construct this example to show that it is not always sufficient
that increasing the energy penalty $J$ will remove anti-crossings.
This example is constructed to further verify  our LENS idea.
The input graph is 
a chain of 7 qubits as shown in Figure \ref{fig:path7}. 
The results with different $J$ are shown in Figure \ref{fig:diff4-2}. 
Recall
that increasing $J$ will increase the energy of the dependent set
states. However, this example is so constructed that all the
low-energy states are independent sets, and thus, increasing $J$ only
increases the already high-energy neighboring states, and thus it does
not effectively change LENS. Furthermore,  the states among the
low-energy spectrum are at least 3 bit-flips from either $\ket{\GS}$
or $\ket{\FS}$, therefore changing to the stoquastic $XX$-driver Hamiltonian does not help.
Indeed, in this example, the lowest 6 states remain the same
even if with a larger $J$ value, or using a larger neighborhood.

\begin{figure}[h]
  \centering
  \includegraphics[width=0.6\textwidth]{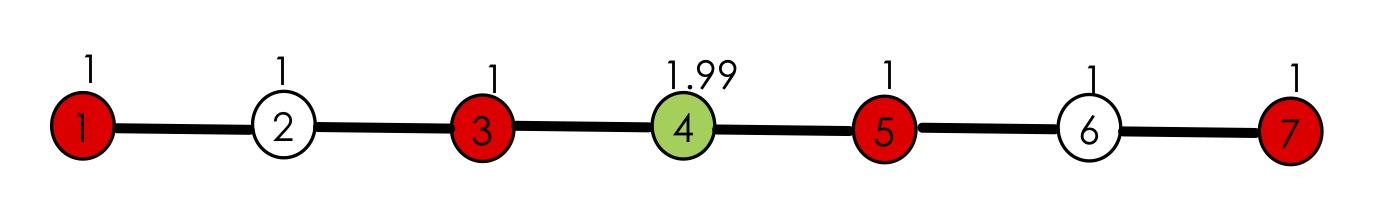}
  \caption{A chain  of 7 weighted vertices. The weight of each vertex
    is 
    indicated above the vertex, where  vertex $w_4=1.99$ and all other
    vertex has weight 1. The mis is $\{1,3,5,7\}$ with wight $4$. But
    there are  a 4-fold degenerate local minima
    $\{146,246,147,247\}$,  with weight $3.99$.}
  \label{fig:path7}
\end{figure}

\begin{figure}[t]
  \centering
$$
  \begin{array}[c]{ll}
J=2, s^*=0.84375, \gap=0.00155 & J=10, s^*=0.70609, \gap=0.00413683\\
\includegraphics[width=0.5\textwidth]{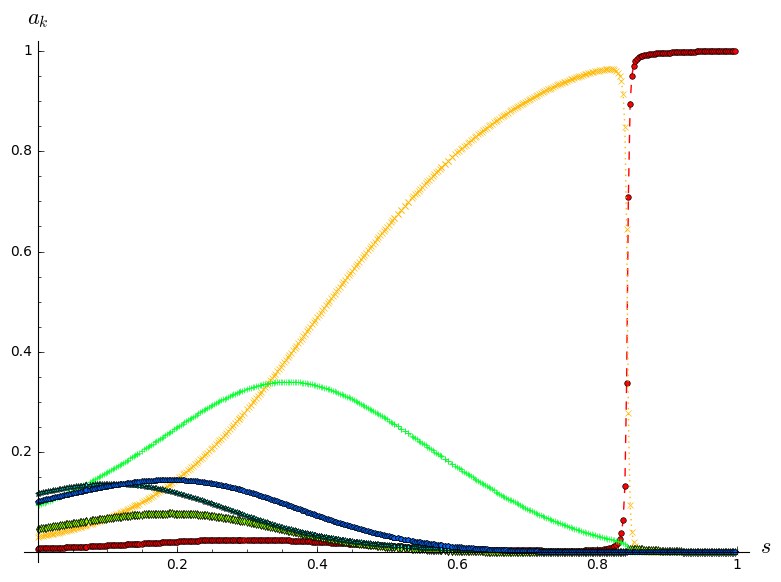} & \includegraphics[width=0.5\textwidth]{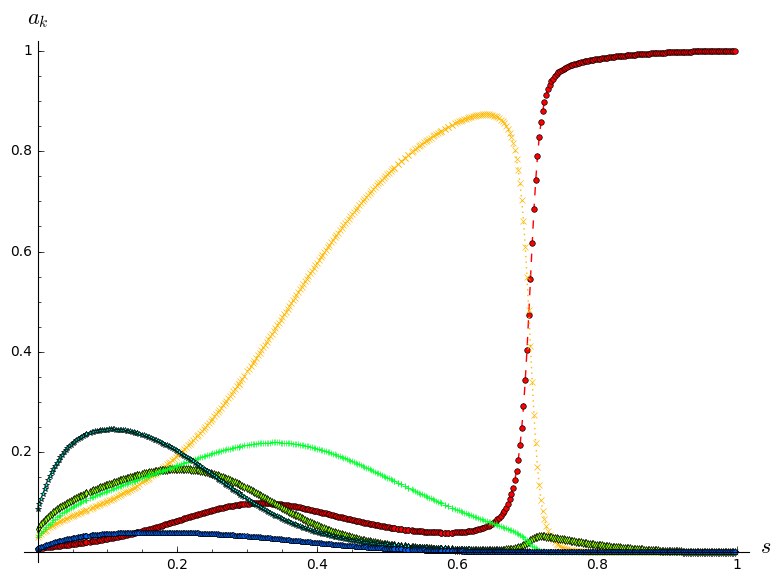} \\
J=100, s^*=0.6456, \gap=0.0074294 & J=1000, s^*=0.6389, \gap=0.0080297\\
  \end{array}
$$
\includegraphics[width=0.9\textwidth]{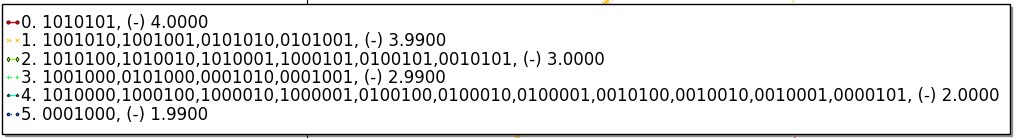} 
  \caption{Different $J$ for the chain-7 instance in Figure~\ref{fig:path7}. The
    lowest 5 states consist of all independent set states and remain
    the same for any $J$. The increase in the
    min-gap as $J$ increases is mainly due to the scaling.}
  \label{fig:diff4-2}
\end{figure}

\paragraph{Remark.}
The arguments
made by Dickson and Amin in \cite{Dickson-Amin} to remove the perturbative
crossings do not apply to the above two examples. Take for instance
the $w_4=1.51$ instance in Example 1. In this example there is
no degeneracy. Increasing $J$ actually introduces the presence of an
anti-crossing and  decreases the min-gap: the
opposite of their argument that it would eliminate the perturbative
crossing and increase the min-gap.  
There are two main differences:
(1) Our anti-crossing is more general than the perturbative crossing
which only applies to the perturbation at the end
of evolution with respect to the problem Hamiltonian.
(2) Our example is for the weighted MIS problem, while the argument
there is for the unweighted MIS problem.

\section{Scaling the Problem Hamiltonian}
\label{sec:scaling}
In this section, we investigate how the minimum
spectral gap  of a scaled
Hamiltonian $H^{\alpha}$ varies according to the parameter $\alpha$.
In particular, we consider the $\alpha$-scaled Hamiltonian:
\begin{align}
  \label{eq:3}
H^{\alpha}(t) =(1-t)H_B + t \alpha \cdot H_P    
\end{align}
where $t \in [0,1]$, $\alpha > 1$. (Remark: We consider $\alpha>1$, and will compare
$H^{\alpha}$ with $H^1$. In the case $\alpha<1$,  one can compare
$H^{\alpha}$ with $(H^{\alpha})^{1/\alpha}$.)
Let $H^{1}=H(s) = (1-s) H_B + s H_P$ where $s\in [0,1]$.

\begin{theorem}
  \begin{enumerate}

\item There is one-one correspondence between $s$ and $t$.  Namely,
  for $s \in [0,1)$, $t(s) =  \frac{s}{\alpha(1-s) +s}$. 
Conversely, for $t \in [0,1)$, $s(t) =\frac{t\alpha}{1+ (\alpha -1)t}$.

\item The eigen-energies are scaled, and the eigenstates are preserved:
\begin{align}
  \label{eq:118}
  E_i^{\alpha}(t) = (1+(\alpha-1)t) E_i(s(t))
\end{align}
where $E_i^{\alpha}(t)$ ($E_i(s(t))$ resp.) are the $i$th eigenvalue
of $H^{\alpha}(t)$ ($H(s(t))$ resp.), for all $i$.
Furthermore, $\ket{E_i^{\alpha}(t)} = \ket{E_i(s(t))}$, for all $i$.
  \item The min-gap is scaled, and the min-gap position is shifted:
  $$\gap(H^\alpha)  = (1+f(\alpha))
  \gap (H^1)$$ where $f(\alpha))  = (\alpha -1) t' >0$, with $t^* \le
  t' \le t(s^*)$, $t^*$ ($s^*$ resp. ) is the position of the minimum gap of
$H^{\alpha}$ ($H^{1}$ resp.), and $t(s) =  \frac{s}{\alpha(1-s) +s}$
for $s \in [0,1)$.
Moreover, $t^* \le t(s^*) < s^*$.
  \end{enumerate}
\end{theorem}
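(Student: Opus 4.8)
The plan is to derive all three parts from a single structural fact: for each $t$ the Hamiltonian $H^{\alpha}(t)$ is a \emph{positive scalar multiple} of $H(s)$ for a suitable $s$. First I would match coefficients. Seeking $c>0$ and $s$ with $H^{\alpha}(t)=c\,H(s)=c(1-s)H_B+cs\,H_P$ and comparing with $H^{\alpha}(t)=(1-t)H_B+t\alpha H_P$ forces $c(1-s)=1-t$ and $cs=t\alpha$. Adding the two equations gives $c=1+(\alpha-1)t$, and then $s=t\alpha/c=t\alpha/(1+(\alpha-1)t)$, which is exactly the claimed $s(t)$. Inverting is routine algebra and returns $t(s)=s/(\alpha(1-s)+s)$; computing $dt/ds=\alpha/(\alpha-(\alpha-1)s)^2>0$ shows the map is a strictly increasing continuous bijection of $[0,1)$ onto itself (with $s=0\mapsto t=0$ and $s\to1\mapsto t\to1$), which is Part 1.

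Part 2 is then immediate. Since $H^{\alpha}(t)=c\,H(s(t))$ with $c=1+(\alpha-1)t>0$, multiplying a Hermitian operator by a positive scalar scales every eigenvalue by $c$ and leaves every eigenvector unchanged, so $E_i^{\alpha}(t)=(1+(\alpha-1)t)\,E_i(s(t))$ and $\ket{E_i^{\alpha}(t)}=\ket{E_i(s(t))}$ for all $i$.

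For Part 3 I would write $g(s)=E_1(s)-E_0(s)$ for the gap of $H^1$ and, using Part 2, reparametrize the gap of $H^{\alpha}$ in the $s$-variable as $F(s):=c(s)\,g(s)$, where a short computation gives $c(s)=1+(\alpha-1)t(s)=\alpha/(\alpha-(\alpha-1)s)$. This $c(s)$ is strictly increasing on $[0,1)$ with $c(0)=1$, hence $c(s)\ge1$ everywhere. Let $\tilde s=s(t^*)$ be the minimizer of $F$ and $s^*$ the minimizer of $g$. The scaling relation follows from a sandwich: on one side $\gap(H^{\alpha})=F(\tilde s)=(1+(\alpha-1)t^*)\,g(\tilde s)\ge(1+(\alpha-1)t^*)\,\gap(H^1)$ since $g(\tilde s)\ge g(s^*)$; on the other, minimality of $F$ at $\tilde s$ gives $\gap(H^{\alpha})=F(\tilde s)\le F(s^*)=(1+(\alpha-1)t(s^*))\,\gap(H^1)$. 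Reading off the ratio $\gap(H^{\alpha})/\gap(H^1)=1+(\alpha-1)t'$ then places $t'$ in $[t^*,t(s^*)]$, and $f(\alpha)=(\alpha-1)t'>0$ because $c>1$ at the (interior) min-gap position $\tilde s>0$.

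The delicate point is the position chain $t^*\le t(s^*)<s^*$. The right-hand inequality $t(s)<s$ for $s\in(0,1)$ is pure algebra (it reduces to $s<1$ after clearing the positive denominator). For $t^*\le t(s^*)$ I would deliberately avoid any unimodality hypothesis on $g$ and argue instead directly on the \emph{weights}: combining $F(\tilde s)\le F(s^*)=c(s^*)g(s^*)$ with $F(\tilde s)=c(\tilde s)g(\tilde s)\ge c(\tilde s)g(s^*)$ and cancelling $g(s^*)>0$ yields $c(\tilde s)\le c(s^*)$; since $c$ is strictly increasing this forces $\tilde s\le s^*$, and applying the increasing map $t(\cdot)$ gives $t^*=t(\tilde s)\le t(s^*)$. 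I expect this comparison-of-weights step to be the main subtlety, since the naive intuition that ``the minimum of $c\,g$ lies to the left of the minimum of $g$'' is only a local statement (it needs $g$ increasing past $s^*$), whereas the $c$-value argument above is global and needs nothing beyond $g(s^*)>0$ and monotonicity of $c$.
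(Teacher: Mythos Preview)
Your proposal is correct and follows essentially the same approach as the paper: both derive the scalar relation $H^{\alpha}(t)=(1+(\alpha-1)t)H(s(t))$, read off the eigenvalue/eigenstate scaling, and then sandwich $\gap(H^{\alpha})$ between $(1+(\alpha-1)t^*)\gap(H^1)$ and $(1+(\alpha-1)t(s^*))\gap(H^1)$. Your direct comparison-of-weights argument for $t^*\le t(s^*)$ (deducing $c(\tilde s)\le c(s^*)$ from $c(\tilde s)g(s^*)\le F(\tilde s)\le F(s^*)=c(s^*)g(s^*)$ and then using monotonicity of $c$) is a slightly cleaner packaging of the same inequalities the paper uses in its proof by contradiction.
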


\begin{proof}
  Let $H(s) = (1-s) H_B + s H_P$ where $s\in [0,1]$.
We will show there is one-one correspondence between $s \in [0,1)$ in $H(s)$ and
$t\in [0,1)$ in $H^{\alpha}(t)$.
Consider 
$$
\left\{ 
\begin{array}{l}
\frac{H(s)}{1-s} = H_B + \frac{s}{1-s} H_P\\
  \frac{H^{\alpha}(t)}{1-t}  =H_B + \frac {t \alpha}{1-t} H_P
\end{array}
\right.
$$

Solving $\frac{s}{1-s} = \frac {t \alpha}{1-t}$, we get 
$s(t) := s =\frac{t\alpha}{1+ (\alpha -1)t} \in [0,1)$, and
$t(s) := t = \frac{s}{\alpha(1-s) +s} \in [0,1)$. Notice that $s(t)$
and $t(s)$ are the bijections and inverse of each other. 

Thus, we have 
$$ \frac{H^{\alpha}(t)}{1-t} = \frac{ H(s)}{1-s}$$
where $s=s(t)$ or $t=t(s)$.

Since $\frac{1-t}{1-s}= \frac{1-t}{ 1- \frac{t\alpha}{1+(\alpha-1)t}}
= 1 +(\alpha -1)t$, we have
\begin{align}
  \label{eq:7}
H^{\alpha}(t) = \frac{1-t}{1-s}H(s)=
(1+(\alpha-1)t)
H(s) 
\end{align}

Therefore, we have
\begin{align}
  \label{eq:8}
  E_i^{\alpha}(t) = (1+(\alpha-1)t) E_i(s(t))
\end{align}
where $E_i^{\alpha}(t)$ ($E_i(s(t))$ resp.) are the $i$th eigenvalue
of $H^{\alpha}(t)$ ($H(s(t))$ resp.).
Also, $\ket{E_i^{\alpha}(t)} = \ket{E_i(s(t))}$.
In particular, $\gap(H^\alpha) = (1+(\alpha-1)t^*)
\mbox{gap}_H(s(t^*)) \ge (1+(\alpha-1)t^*) \gap(H)$. 

Similarly, $\gap(H^{\alpha}) \le \mbox{gap}_{H^{\alpha}}(t(s^*)) =
(1+(\alpha -1)t(s^*)) \gap(H)$.


Next we show that $t^* \le t(s^*)< s^*$.

Since $\alpha>1$, $t(s) = \frac{s}{\alpha(1-s) +s} <
\frac{s}{(1-s) +s}=s$, in particular $t(s^*) < s^*$. 

From Eq.~\eqref{eq:8},
$$
\left\{
\begin{array}{ll}
\mbox{gap}_{H^{\alpha}}(t(s^*)) &= (1+(\alpha-1)t(s^*))
\mbox{gap}_H(s^*)\\
\mbox{gap}_{H^{\alpha}}(t^*) &= (1+(\alpha-1)t^*)
\mbox{gap}_H(s(t^*))  
\end{array}
\right.
$$
Since $\gap(H) = \mbox{gap}_H(s^*) \le 
\mbox{gap}_H(s(t^*))$, if $t(s^*) < t^*$, it would imply
$\mbox{gap}_{H^{\alpha}}(t(s^*)) < (1+(\alpha-1)t^*)
\mbox{gap}_H(s^*) = (1+(\alpha-1)t^*)\gap(H) 
<  (1+(\alpha-1)t^*) \mbox{gap}_H(s(t^*))
< \mbox{gap}_{H^{\alpha}}(t^*) =
\gap(H^{\alpha})$ a contradiction.
\end{proof}

\paragraph{Remarks:}
\begin{itemize}
\item The above theorem implies that for $\alpha>1$,  the minimum gap increases as $\alpha$
increases.
\item Notice that $t(s^*)$ and $t^*$ are not
  necessarily the same. For $\alpha>1$, $t^* \le t(s^*)< s^*$.
This implies that by increasing $\alpha$, we also push the position
of the minimum gap towards $0$. This is because when $\alpha$ is large, $t^*
\leq t(s^*) = \frac{s^*}{\alpha(1-s^*) +s^*} \leadsto 0$. This may
also pose a physical limitation of the value of $\alpha$ as too close
to zero may shorten the ``quantum phase''.
\item If we allow infinite energy value, e.g. by setting $\alpha = 1 +
  K/t(s^*)$, for an arbitrarily large $K$,  we have $\gap(H^\alpha) \ge (1+K)
  \gap (H^1)$. That is, if we allow infinite energy
  value (which is unrealistic), we can have the min-gap arbitrarily
  large. 
\end{itemize}


\paragraph{Anti-crossing preserved under scaling.}
If the curvature around the min-gap is sharp(as in the
exponentially small gap case),  $t^* = t(s^*)$, and $s(t^*)=s^*$.
Then,
\begin{align}
  \label{eq:scale-factor}
\gap(H^\alpha)  = (1+(\alpha -1)t(s^*))
  \gap (H^1).
\end{align}
What is more, 
$\ket{E_0^{\alpha}(t^*)} = \ket{E_0(s^*)}$.  Hence
the presence of the anti-crossing will be preserved under our parametrized definition.
To illustrate, we use $w_4=1.51, J=10$ in Example 1. The scaling
or renormalizing factor  is $\alpha=10$. The comparison is shown in Figure \ref{fig:scaling}.
Furthermore, if it is a weak anti-crossing, by scaling, one can not
make it to a (strong) anti-crossing.

\paragraph{Renormalization.} 
When comparing the performance  of
different parameter QA algorithms, there is an argument that one needs to renormalize the problem
Hamiltonian so that the largest parameter is of the same
value. Indeed, scaling/renormalizing the problem Hamiltonian will
change the min-gap according to our above theorem. 
As we show in Figure \ref{fig:scaling}, we know exactly how much the contribution to the min-gap is due to
the scaling (according to Eq~(\ref{eq:scale-factor})), and how much is due to the difference in the parameters. Hence,
there is no such  need for renormalization, when one can use the
parameter value as it is.




\begin{figure}[t]
  \centering
$$
  \begin{array}[h]{cc}
   \includegraphics[width=0.4\textwidth]{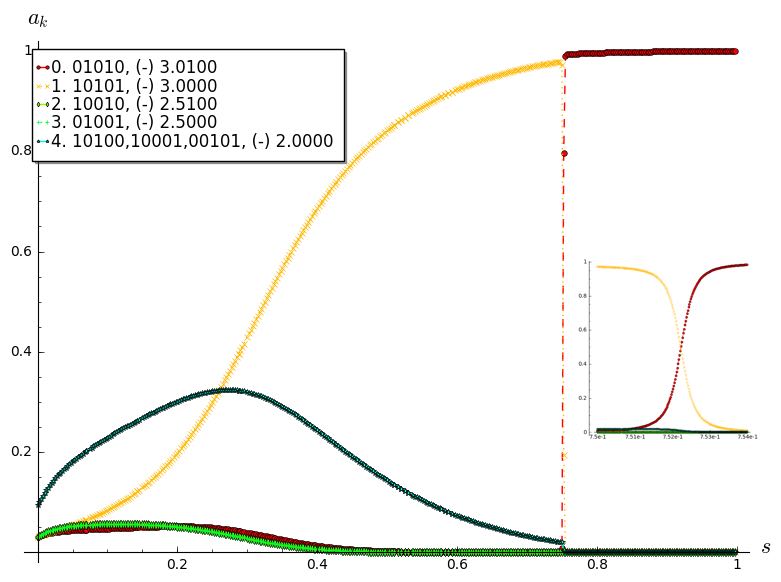} &
  \includegraphics[width=0.4\textwidth]{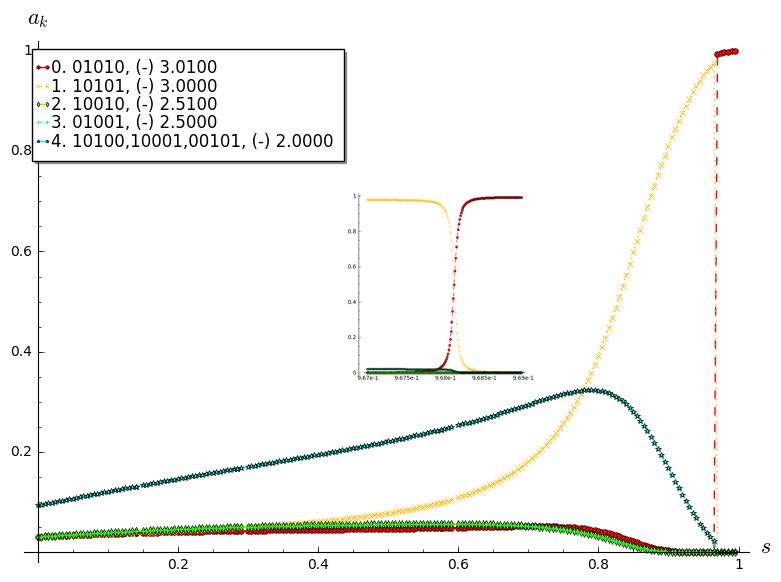} \\
(a) \alpha=10 & (b) \alpha=1 
  \end{array}
$$ 
 \caption{Evolution of $a_k(s)$ for the
    instance with $w_4=1.51$, and $J=10$ of Example 1.
The $(\gamma, \epsilon)$-Anti-crossing is preserved with two different
scaling factors (a) $\alpha=10$ and  (b) $\alpha=1$, where $\epsilon=0$ and
$\gamma \le 0.001$ for both, $\delta=0.02$ for (a), but $\delta=0.002$ for (b).
Moreover, (a) $t^*=0.7522$,
$\mingap=1.2e-4$; and (b) $s^*=0.9681$, $\mingap=1.566e-5$.
One can check that $t^*=t(s^*)$, the min-gaps satisfy $\gap(H^\alpha)  = (1+(\alpha -1)t(s^*))
  \gap (H^1)$, where the factor $(1+(\alpha -1)t(s^*)= 7.7698$, $\alpha=10$.
}
  \label{fig:scaling}
\end{figure}



\section{Summary and Discussion}
In this paper, we show that the problem Hamiltonian parameters can
affect the minimum spectral gap of the adiabatic algorithm. 
The main argument we use  to assess the performance of
a QA algorithm is the presence or absence of an anti-crossing
during quantum evolution. For this purpose, we introduce a new parametrization definition of the
anti-crossing. 
Our definition allows us to characterize or provide a
signature for identifying an anti-crossing based on the numerical
diagonalization of the Hamiltonian.  
This is in contrast
to other numerical studies that
compute the min-gaps for some small number of instances of size up to
20, and then best fit the data with some exponentially small function
(see e.g.
Figure~\ref{fig:LG}). Besides, sometimes the evidence of quantum
tunneling that was quantified by a sharp change in the ground state
expectation of the Hamming weight operator  $\langle
HW \rangle$, such as Figure 6 in \cite{Non-stoquastic}, was presented
for further justification. Anti-crossing is intimately related to the quantum tunneling. Indeed,
the loop-gadgets example was constructed to have small anti-crossing
gaps in order to show the evidence of the quantum tunneling.
As we remarked, our anti-crossing definition readily gives rise to the values of the Hamming weight operator  $\langle
HW \rangle$.
Our anti-crossing definition reflects the known
concept of the anti-crossing (c.f. \cite{Wilkinson1}) and
is more general
than the perturbative crossing in \cite{Amin-Choi,Dickson-Amin}. A
perturbative crossing is necessarily an anti-crossing, but an anti-crossing is not necessarily
a perturbative crossing which is limited to the location near the
end of evolution when the perturbation theory is applied to the problem
Hamiltonian as the unperturbed Hamiltonian.
In \cite{Amin-Choi}, an estimation formula of the min-gap size was given
based on the perturbation theory. We
are investigating how to estimate the min-gap size based on this more general
anti-crossing definition. 

Based on our LENS
observation that the presence or absence of an anti-crossing depends
on the relation of the ground state and the first excited state with
their LENS, we construct two Maximum-weighted Independent Set (MIS) examples to answer the questions we
study. More specifically, we construct Example 1 to show that one can
change the energy penalty parameter $J$ (without changing the problem to be solved) 
to change the quantum evolution (from the presence of an anti-crossing
to the absence, or the other way around). However, we also show in Example 2  that by
changing the value of $J$ alone, 
one can not avoid the anti-crossing. 
The examples we construct are of small sizes for easy
illustration. It is not difficult to construct a larger size of
instances such that the argument still holds (e.g. one can construct
instances which contain our small graph as subgraphs). 
Admittedly, our LENS idea is still premature and  incomplete because the general situation can be much more complex, with many more
anti-crossings. Nevertheless, for what its worth, we believe that
this simple observation is useful  to give some understanding of
the working of the QA algorithm, especially in this early stage of
QA research where rigorous  small experimental tests are needed. 
Moreover, by understanding the causes
of the formation of the anti-crossing, one can also learn to come up
with possible ways
of avoiding such bottlenecks, as we show in our Example 1. 

We construct our examples of the MIS problem because it has a natural
parameter-flexible Ising formulation. However, our results do not lose generality because any Ising problem can be easily and efficiently reduced
to the parameter-flexible MIS-Ising Hamiltonian and thus also allow the  flexibility to change the
parameters without changing the problem to be solved. Based on our
results, one possible advantage of expressing as an MIS-Ising
Hamiltonian is then that one
may {\em adaptively} re-run the program based on the measured
output to improve the performance.
For example, one idea is as follows: assume that the output state
$\ket{\phi}$  is $\ket{\FS}$ (which has more LENS according to the
original setting). We then increase the energy penalty of
$\ket{\phi}$'s neighbors (again we can do so because of the MIS-Ising Hamiltonian). 
This will effectively increase $\ket{\phi}$'s neighboring states (dependent set)
in the next run,
and thus discourage the state from becoming dominant in the early evolution if
$\ket{\phi}$ is not the true ground state. 
In the case of Example 1, it is possible to improve the algorithm
performance. However, as we show in Example 2, this is not always
sufficient to remove the anti-crossing.

We also show exactly how the min-gap is scaled if we scale the problem
Hamiltonian by a constant factor. In particular, one can
increase the min-gap to arbitrarily large if the energy value could be
infinite (which is unphysical).
Furthermore, our result implies that there is no need for
renormalization of the parameters in order for the comparison of
different parameter QA algorithms.
These results are to raise attention to the importance of the dynamic
range/precision/resolution of the parameters for the quantum annealer, and also to re-iterate the possibilities of different input
formulation (either with different parameter values or through some
NP-complete reductions as discussed in \cite{Diff2}) for the same problem, which may  also pose a challenge to
the benchmarking task.

The driver Hamiltonians we study in this paper are known as  {\em
  stoquastic} Hamiltonians (we use the property that their ground state is
the uniform superposition state with all positive amplitudes so that
our LENS idea applies).
There are arguments that the stoquastic
Hamiltonian system can be efficiently simulated by quantum Monte Carlo (QMC),
see \cite{Non-stoquastic,Layla,obstruction1,obstruction2} and
references therein. On the other hand, there are
examples \cite{Amin-Choi,AKR} that show that the quantum annealing algorithm with the transverse-field
driver Hamiltonian and a certain formulation of the problem 
Hamiltonian\footnote{See \cite{PNAS-correction} for a correction to \cite{AKR}.} will take exponential time because of the presence of 
anti-crossings. While one may eliminate the perturbative crossings
\cite{Dickson-Amin,Dickson}, the question of eliminating the more
general anti-crossings remains open. 
Perhaps the advantage of QA algorithms for the stoquastic Hamiltonian system over the
classical algorithms for the NP-hard optimization problem is questionable (e.g. because of the efficient
QMC simulation). Nevertheless, from the algorithmic point of view, it is still of
the value to better understand the working of the QA algorithm. For
example, 
by understanding the causes of the formation of
an anti-crossing, one may be able to come up with possible ways to
overcome them (such as what we show here by changing the parameter values),
and this will directly (through the QMC simulation) result in an efficient quantum-inspired
classical algorithm for these problem instances. 
Furthermore, it is possible to gain insight from the 
stoquastic Hamiltonian system and design the corresponding
non-stoquastic counterpart to overcome the anti-crossing problem as we
discuss in  \cite{driver2}, where we study both
stoquastic and non-stoquastic $XX$-driver Hamiltonians and different driver graphs
and show their effects on the quantum evolution of the QA algorithm.

Finally, we remark that there is a work of quantum speedup in
stoquastic adiabatic quantum computation (stoqAQC) in
\cite{Fujii}. The proposed stoqAQC model there requires non-standard
basis measurements and  
demands high-quality qubits, and quantum error
correction which is necessary for any quantum computing device, to achieve the speedup.

\section*{Acknowledgement}
The author is grateful to Tameem Albash for his generous sharing/update knowledge of the current
state of art the  QA algorithm, especially in regard to topics
including the  min-gap
estimation, anti-crossing and perturbative crossing, quantum
tunneling; and for his many
comments on this paper; and for the permission to use the loop-gadgets
example (including Figure \ref{fig:LG})  he constructed in this paper. 
Special thanks also go to Daniel Lidar, for the invitation to resume
the AQC research, and for his comments on this paper.
I would also like to thank Adrian Lupascu and his QEO group members
for the comments.
The author also gratefully acknowledges the use of the Sage
Mathematics Software~\cite{sage}
for the computation of the eigenvalues and plotting the graphs in this paper. 
The research is based upon work partially supported by the Office of
the Director of National Intelligence (ODNI), Intelligence Advanced
Research Projects Activity (IARPA), via the U.S. Army Research Office
contract W911NF-17-C-0050. The views and conclusions contained herein are
those of the authors and should not be interpreted as necessarily
representing the official policies or endorsements, either expressed or
implied, of the ODNI, IARPA, or the U.S. Government. The U.S. Government
is authorized to reproduce and distribute reprints for Governmental
purposes notwithstanding any copyright annotation thereon.

\end{document}